\documentclass[twocolumn,10pt,superscriptaddress,amsmath,amssymb,aps,pre,floatfix,]{revtex4-2}

\usepackage{graphicx}
\usepackage{bm}
\usepackage{times}
\usepackage{amsmath,amssymb}
\usepackage{amsthm}
\usepackage{enumerate}
\usepackage{multirow}
\usepackage{siunitx}
\usepackage{makecell}
\usepackage{physics}
\usepackage{comment}
\usepackage[caption=false]{subfig}
\usepackage{tikz}
\usetikzlibrary{quantikz2}
\usepackage{tabularx}
\usepackage{xcolor}
\definecolor{RED}{rgb}{1,0,0}

\usepackage[colorlinks=true,linkcolor=blue,citecolor=red, linktocpage=true,breaklinks=true]{hyperref}

\newtheorem{theorem}{Theorem} 
\newtheorem{lemma}[theorem]{Lemma} 
\newtheorem{corollary}[theorem]{Corollary}

\newtheorem{proposition}[theorem]{Proposition}

\begin{document}

\title{Negative quasiprobability trajectories for Bell-diagonal states under local decoherence}

\author{Qing \surname{Yu}}
\email{qingyu2003@stumail.nwu.edu.cn}
\affiliation{School of Physics, Northwest University, Xi’an 710127, China}

\author{Kun \surname{Zhang}}
\email{kunzhang@nwu.edu.cn}
\affiliation{School of Physics, Northwest University, Xi’an 710127, China}
\affiliation{Shaanxi Key Laboratory for Theoretical Physics Frontiers, Xi'an 710127, China}
\affiliation{Peng Huanwu Center for Fundamental Theory, Xi'an 710127, China}
\affiliation{Fundamental Discipline Research Center for  Quantum Science and Technology of Shaanxi Province, Xi'an 710127, China}

\author{Yang-Yang Chen}
\email{chenyy@nwu.edu.cn}
\affiliation{Institute of Modern Physics, Northwest University, Xi'an 710127, China}
\affiliation{Shaanxi Key Laboratory for Theoretical Physics Frontiers, Xi'an 710127, China}
\affiliation{Peng Huanwu Center for Fundamental Theory, Xi'an 710127, China}
\affiliation{Fundamental Discipline Research Center for Quantum Science and Technology of Shaanxi Province, Xi'an 710127, China}

\author{Xiao-Hui Wang}
\email{xhwang@nwu.edu.cn}
\affiliation{School of Physics, Northwest University, Xi’an 710127, China}
\affiliation{Shaanxi Key Laboratory for Theoretical Physics Frontiers, Xi'an 710127, China}
\affiliation{Peng Huanwu Center for Fundamental Theory, Xi'an 710127, China}
\affiliation{Fundamental Discipline Research Center for Quantum Science and Technology of Shaanxi Province, Xi'an 710127, China}

\date{\today}

\begin{abstract}
The fluctuation theorem (FT) relates microscopic trajectory distributions to macroscopic averages. Using quasiprobability trajectories, this framework has recently been extended to quantum information dynamics. Despite sharing the form of conventional thermodynamic FTs, information FTs involve quasiprobabilities whose negativity and statistical properties remain poorly understood. We analytically determine the properties of negative distributions based on the two-qubit Bell-diagonal states evolving under local dephasing, depolarizing, and amplitude-damping channels. We show that the occurrence of negative quasiprobabilities is not determined by the entanglement or Bell nonlocality of the initial state. Instead, a diagonal-interference decomposition of the transition quasiprobability provides a sufficient condition for negativity. More importantly, we prove that quasiprobability distributions containing negative weights can violate Horv\'ath's necessary and sufficient criterion for the Jensen-Steffensen inequality to hold for every continuous convex function, even though the specific exponential Jensen-like relation enforced by the FT and the quantum data-processing inequality remains valid. This trajectory-level contrast with classical stochastic descriptions reveals a distinctive feature of quantum information dynamics that is invisible when only the corresponding average-level information inequalities are considered. 
\end{abstract}

\maketitle

\section{\label{sec:intro} Introduction}

The fluctuation theorem (FT) provides a microscopic refinement of the macroscopic second law by constraining the full probability distribution of trajectory-level thermodynamic quantities rather than only their averages~\cite{Evans1993,Jarzynski1997,Crooks1999,Seifert2005}. It relates the probability ratio of forward and time-reversed microscopic trajectories to entropy production, work, or heat, thereby quantifying irreversibility far from equilibrium. More generally, integral FTs can be understood from a retrodictive viewpoint in which normalization of an appropriately defined reverse process plays a central role~\cite{BuscemiScarani2021,AwBuscemiScarani2021}. FTs have consequently become a central framework of stochastic and quantum thermodynamics~\cite{Esposito2009,Campisi2011,FunoUedaSagawa2018,LandiPaternostro2021}.

Information modifies the thermodynamic structure when a system exchanges information with a memory, observer, or feedback controller. In such settings, information terms enter the FT together with entropy production and lead to information-corrected second laws~\cite{SagawaUeda2010,HorowitzVaikuntanathan2010,SagawaUeda2012,HorowitzEsposito2014}. A more intriguing generalization is to directly characterize information dynamics within the FT framework, rather than treating information as a correction. Recently, information quantities such as correlations, coherence, and out-of-time-ordered correlators have been shown to satisfy quantum FT identities~\cite{Halpern2016JarzynskilikeEF,Halpern2017QuasiprobabilityBT,Micadei2020,Zhang2021ConditionalEP,Zhang2022QuasiprobabilityFT,Zhang2026Multipartite}. Rather than implying a thermodynamic inequality, an informational FT corresponds to an information inequality such as the quantum data-processing inequality~\cite{LiebRuskai1973,SchumacherNielsen1996,buscemi2014complete}.

An important distinction separates information FTs from conventional quantum thermodynamic FTs. The latter are commonly formulated using the two-point-measurement scheme, in which thermodynamic changes are assigned to pairs of projective-measurement outcomes and the intervening quantum evolution enters through conditional transition probabilities~\cite{Tasaki2000,Talkner2007}. Such a classical trajectory description generally cannot be constructed for quantum information dynamics. Moreover, an initial projective measurement can erase the coherence and correlations whose dynamics the information FT is intended to characterize. Trajectories given by the Kirkwood-Dirac (KD) quasiprobability resolve this obstruction by assigning weights to sequences of incompatible operators~\cite{kirkwood1933quantum,dirac1945rmp,lostaglio2023kd,ArvidssonShukur2024KD,GherardiniDiChiara2024}. They preserve the normalized trajectory description, but their weights may be negative or complex and hence differ fundamentally from classical probability distributions.

Negative and nonreal KD quasiprobabilities have been studied in connection with contextuality, incompatibility, scrambling, and quantum advantages~\cite{Lostaglio2018,gonzalez2019out,Arvidsson-Shukur_2021,de2021kirkwood,Pashayan2015,LevyLostaglio2020}. However, their quantitative properties and statistical distributions have rarely been explored. For example, how negative can a quasiprobability become in the context of quantum information FTs? How is the negativity distributed among quasiprobability trajectories? In addition, there is a tension between the standard Jensen's inequality and the Jensen-Steffensen inequality, since the latter admits negative weights under suitable conditions. 

In this work, we use the analytically tractable decoherence dynamics of two-qubit Bell-diagonal states under local dephasing, depolarizing, and amplitude-damping channels to investigate the negativity and statistical distribution of the KD quasiprobability trajectories associated with the FT for quantum mutual-information dissipation~\cite{Zhang2026Multipartite}. A comparison of the three channels yields a sufficient condition for negativity. For amplitude damping, we analytically determine the most negative quasiprobability trajectory weight, together with the parameter-independent bounds on its negativity. Unexpectedly, the amplitude-damping examples show that the resulting KD quasiprobability trajectory distributions can violate Horv\'ath's necessary and sufficient criterion for the Jensen-Steffensen inequality to hold for every continuous convex function~\cite{Horvath2024}. At the microscopic trajectory level, this finding shows that classical and quantum information inequalities can coexist at the average level while resting on different stochastic structures. Here, the quantum relation is protected by the logarithmic information variable and its associated exponential FT, rather than by a nonnegative trajectory distribution that supports Jensen's inequality for arbitrary convex functions. 


The paper is organized as follows. Section~\ref{sec:quasi_FT} reviews the quasiprobability FT for mutual information dissipation and derives its operator-sum trajectory representation. Section~\ref{sec:dephase_depolarizing} applies this framework to Bell-diagonal states under local dephasing and depolarizing channels. Section~\ref{sec:decay} analyzes local amplitude damping, identifies the negative trajectories, and determines the most negative trajectory weight. Section~\ref{sec:neagative} develops the interference decomposition and the sufficient condition for negativity. Section~\ref{sec:violation_horvath} examines the validity of Horv\'ath's criterion based on the amplitude damping dynamics. Section~\ref{sec:conclusion} summarizes the results and discusses future directions. The appendices provide the detailed calculations and proofs underlying the results in the main text.

\section{\label{sec:quasi_FT} Quasiprobability trajectory and FT}

Section~\ref{sub:QFT} reviews the quasiprobability FT for mutual information dissipation. In Sec.~\ref{subsec:kraus_quasiprobability}, we then express the associated trajectory quasiprobability in Kraus-operator form to facilitate the subsequent calculations.

\subsection{\label{sub:QFT}Quasiprobability FT for mutual information dissipation}
Consider a bipartite system labeled $S=S_1S_2$ with an initial state denoted by $\rho_S$. The dissipative dynamics is modeled by allowing each qubit to interact locally with its own environment, and the global evolution is therefore described by $U_{SE}=U_{S_1E_1}\otimes U_{S_2E_2}$. The total environment is initially prepared in a product state $\rho_E=\rho_{E_1}\otimes \rho_{E_2}$. The final joint state is
\begin{equation}
\rho'_{SE}=U_{SE}(\rho_S\otimes\rho_E)U_{SE}^\dagger,
\end{equation}
and tracing out the environments yields the final system state $\rho'_S=\Tr_E(\rho'_{SE})$. In what follows, we use the convention that variables associated with the final state are denoted by a prime, while initial-state variables are unprimed.

The total correlation between the two qubits is quantified by the quantum mutual information~\cite{wilde2017quantum}
\begin{equation}
\mathcal{I}(\rho_S)= \sum_{j=1}^2 S(\rho_{S_j}) - S(\rho_S),
\end{equation}
where $S(\rho)=-\Tr(\rho\ln\rho)$ is the von Neumann entropy. The mutual information change is therefore $\Delta \mathcal{I}=\mathcal{I}(\rho_S)-\mathcal{I}(\rho'_S)$. Explicitly, we have
\begin{equation}
\label{eq:Delta_I}
\Delta \mathcal{I} = \sum_{j=1}^2 S(\rho_{S_j}) - S(\rho_{S}) - \sum_{j=1}^2 S(\rho'_{S_j}) + S(\rho'_S).
\end{equation}
Since the evolution acts locally on $S_1$ and $S_2$, the mutual information cannot increase, so $\Delta \mathcal{I}\geq 0$ by the quantum data-processing inequality~\cite{LiebRuskai1973,SchumacherNielsen1996,buscemi2014complete}. Note that the bipartite mutual information admits a direct multipartite generalization involving the von Neumann entropies of all subsystems, which is also called the total quantum correlation~\cite{DeChiara2017GenuineQC}.

For a density matrix $\rho$ with eigenvalues $p_k$, the von Neumann entropy can be expressed as $S(\rho)=-\sum_k p_k\ln p_k$. Therefore, the quantity $-\ln p_k$ can be viewed as a stochastic entropy associated with the distribution $p_k$~\cite{Seifert2005}. Following the same logic as in stochastic thermodynamics, we can define a stochastic mutual information change, denoted by
\begin{equation}
\label{eq:stochastic-mutual information}
\Delta \iota[\gamma]= -\sum_{j=1}^2 \ln p_{s_j} + \ln p_k + \sum_{j=1}^2 \ln p_{s'_j} - \ln p_{k'},
\end{equation}
where $\gamma=\{k,s_1,s_2,k',s'_1,s'_2\}$ collects the stochastic variables, and the probabilities $p_k$ and $p_{s_j}$ are the eigenvalues of the global state $\rho_S$ and the local states $\rho_{S_j}$, respectively. Formally, we have $\rho_S=\sum_k p_k \Pi_k$ and $\rho_{S_j}=\sum_{s_j}p_{s_j}\Pi_{s_j}$ with $j=1,2$, where $\Pi_k$ and $\Pi_{s_j}$ are the corresponding spectral projectors. The probabilities associated with the final states can be defined accordingly using primed variables. Note that we have a term-by-term, one-to-one correspondence between the mutual information change $\Delta\mathcal I$ in Eq.~\eqref{eq:Delta_I} and the stochastic mutual information change $\Delta\iota[\gamma]$ in Eq.~\eqref{eq:stochastic-mutual information}.


The bridge between the stochastic mutual information change and the average mutual information change is provided by a stochastic description of the dynamics, which contains the correct marginal distributions for the global and local states at both the initial and final times. However, due to quantum entanglement, the spectral projectors of the global and local states do not commute, and therefore the joint distribution of the stochastic variables $\gamma$ is not well defined. Instead, a quasiprobability distribution can reproduce the correct marginal distributions for noncommuting observables~\cite{lostaglio2023kd,ArvidssonShukur2024KD,GherardiniDiChiara2024}. Specifically, we consider the KD quasiprobability distribution defined as follows~\cite{Zhang2026Multipartite}
\begin{equation}
\label{eq:quantum-trajectory-quasiprobability}
\mathcal{Q}[\tilde\gamma]
=\Tr\!\left(
U_{SE}^\dagger \Pi_{k'}\Pi_{s'}\Pi_{n'} U_{SE}\Pi_s\Pi_n\Pi_k \,\rho_S\rho_E
\right),
\end{equation}
with $\tilde\gamma = \gamma \cup \{n_1,n_2,n'_1,n'_2\}$, where $n_j$ and $n'_j$ label the stochastic variables associated with the initial and final environmental states, respectively, whose spectral decompositions are $\rho_{E_j} = \sum_{n_j} p_{n_j} \Pi_{n_j}$ and $\rho'_{E_j} = \sum_{n'_j} p_{n'_j} \Pi_{n'_j}$. We simplify the notation as $\Pi_s=\Pi_{s_1}\otimes\Pi_{s_2}$ and $\Pi_{n}=\Pi_{n_1}\otimes\Pi_{n_2}$. The tensor-product symbol has been omitted in Eq. \eqref{eq:quantum-trajectory-quasiprobability} for brevity and without ambiguity. The KD quasiprobability distribution $\mathcal{Q}[\tilde\gamma]$ is normalized, i.e., $\sum_{\tilde\gamma} \mathcal{Q}[\tilde\gamma] = 1$. Its marginals reproduce the correct distributions of the initial and final global and local states. However, its values can be negative or complex for noncommuting projectors, which is a signature of quantum interference.

The KD quasiprobability distribution $\mathcal Q[\tilde \gamma]$ correctly reproduces the average mutual information change $\Delta \mathcal{I}$. Formally, we have
\begin{equation}
\label{eq:classical-mutual information}
\langle\Delta\iota\rangle_{\mathcal{Q}}=\sum_{\tilde\gamma}
\mathcal{Q}[\tilde\gamma]\,\Delta\iota[\gamma]=\Delta\mathcal{I}.
\end{equation}
Moreover, the exponential average of the stochastic mutual information change satisfies the integral FT~\cite{Zhang2026Multipartite}
\begin{equation} 
\label{eq:FT}
\langle e^{-\Delta\iota}\rangle_{\mathcal{Q}}=\sum_{\tilde\gamma}
\mathcal{Q}[\tilde\gamma]\,e^{-\Delta\iota[\gamma]}=1.
\end{equation}
Although the mathematical form of the above integral FT appears identical to that of the standard FT~\cite{Campisi2011}, it differs from the traditional quantum FT in two respects. First, it describes the statistics of a purely informational quantity associated with an informational inequality $\Delta\mathcal I\geq 0$. Second, it employs a quasiprobability distribution that admits negative and complex values.


\subsection{\label{subsec:kraus_quasiprobability}Operator-sum representation of the quasiprobability trajectory}

There is a mismatch between the stochastic variables of $\Delta\iota[\gamma]$ and $\mathcal Q[\tilde\gamma]$, since the latter contains the environmental variables $n$ and $n'$. As we are concerned only with the statistics of $\Delta\iota[\gamma]$, we sum over the environmental variables of $\mathcal Q[\tilde\gamma]$, which is analogous to taking a partial trace over the density matrix and the unitary evolution operator \cite{NielsenChuang2010}. In what follows, we therefore derive the operator-sum representation of the quasiprobability trajectory of $\mathcal Q[\tilde\gamma]$. 

For each subsystem $S_j$ with $j=1,2$ locally interacting with its environment $E_j$, the dynamics of the subsystem alone can be described by a general quantum operation, also known as the Kraus operator-sum representation. Specifically, we have
\begin{equation}
\rho'_{S_j} = \mathcal E_{S_j}(\rho_{S_j}) =\sum_{l_j,n_j}K_{l_j,n_j}\rho_{S_j}
K^{\dagger}_{l_j,n_j},
\end{equation}
where $K_{l_j,n_j}$ is the Kraus operator given by $K_{l_j,n_j}
=\sqrt{p_{n_j}}\,\bra{e_{l_j}}U_{S_jE_j}\ket{n_j}$. Here, $j=1,2$ label the subsystems, and $\{\ket{e_{l_j}}\}$ is an orthonormal basis of the environment $E_j$. The Kraus operators satisfy the completeness relation $\sum_{l_j,n_j}K^{\dagger}_{l_j,n_j}K_{l_j,n_j}=\mathbb{I}_{S_j}$, which ensures that the map $\mathcal E_{S_j}$ is trace preserving. Because the two subsystem-environment interactions are local, the reduced dynamics of the bipartite system is governed by the product channel $\rho'_S=\mathcal E_S(\rho_S)$ with $\mathcal E_S=\mathcal E_{S_1}\otimes\mathcal E_{S_2}$. 

Based on the above Kraus operator-sum representation, we now eliminate the environmental trajectory variables of the KD quasiprobability $\mathcal Q[\tilde\gamma]$. With $n=\{n_1,n_2
\}$ and $n'=\{n'_1,n'_2\}$, we define the reduced system KD quasiprobability as
\begin{equation}
\mathcal Q_S[\gamma] =\sum_{n,n'}\mathcal Q[\tilde\gamma].
\end{equation}
Because of the completeness relation of the projectors $\Pi_n$ and $\Pi_{n'}$, we obtain
\begin{multline}
\label{eq:Q_S}
\mathcal Q_S[\gamma] = \Tr_{SE}\left(U_{SE}^{\dagger}\Pi_{k'}\Pi_{s'}U_{SE}\Pi_s\Pi_k\rho_S\rho_E\right) \\
= \Tr_S\left(
\Pi_{k'}\Pi_{s'}\mathcal E_S(\Pi_s\Pi_k\rho_S)\right).
\end{multline}
Using $\Pi_k\rho_S=p_k\Pi_k$, which follows from the spectral decomposition of $\rho_S$, the KD quasiprobability can be expressed as
\begin{equation}
\label{eq:QS-q-pk}
\mathcal Q_S[\gamma]
=p_k\,q[\gamma],
\end{equation}
where $q[\gamma]$ serves as the transition quasiprobability of the dynamics, given by 
\begin{equation}
q[\gamma] = \Tr\left(
\Pi_{k'}\Pi_{s'}\mathcal E_S(\Pi_s\Pi_k)\right).
\end{equation}
For rank-one projectors, the transition quasiprobability $q[\gamma]$ above takes the amplitude form
\begin{equation}
q[\gamma]
=\langle k'|s'\rangle
\bra{s'}
\mathcal E_S(\Pi_s\Pi_k)
\ket{k'}.
\label{eq:q-operator-sum}
\end{equation}
The reduced KD quasiprobability $\mathcal Q_S[\gamma]$ is properly normalized, $\sum_{\gamma}\mathcal Q_S[\gamma]=1$. By contrast, $q[\gamma]$ is a conditional transition quasiprobability, which satisfies $\sum_{s,k',s'}q[k,s,k',s']=1$ for every fixed initial global label $k$. For $p_k>0$, negativity or nonreality of $q[\gamma]$ is inherited by $\mathcal Q_S[\gamma]=p_kq[\gamma]$. When $p_k=0$, the corresponding reduced trajectory weight vanishes.

Correspondingly, the first moment of $\Delta\iota[\gamma]$ and the integral FT in Eqs.~\eqref{eq:classical-mutual information} and~\eqref{eq:FT} can be expressed in terms of the reduced KD quasiprobability $\mathcal Q_S[\gamma]$ as
\begin{subequations} 
\begin{align}
\label{eq:Qs_avg}
&\langle\Delta\iota\rangle_{\mathcal Q_S}
=\sum_{\gamma}\mathcal Q_S[\gamma]\,
\Delta\iota[\gamma]
=\Delta\mathcal I,\\
\label{eq:Qs_fluct}
&\langle e^{-\Delta\iota}\rangle_{\mathcal Q_S}
=\sum_{\gamma}\mathcal Q_S[\gamma]\,
e^{-\Delta\iota[\gamma]}
=1.
\end{align}
\end{subequations}
The above quasiprobability FT resembles the quasiprobability FT for a quantum channel \cite{KwonKim2019}. However, the main difference is that the quasiprobability FT here concerns mutual information dissipation, whereas the latter concerns a complex entropy production defined by the ratio of the forward and backward KD quasiprobability trajectories.

\section{\label{sec:dephase_depolarizing}Quasiprobability distributions under dephasing and depolarizing channels}

In this section, we first introduce the basic properties of Bell-diagonal states in Sec.~\ref{Bell diagonal state}. We then analytically examine the KD quasiprobability distribution of Bell-diagonal states under local dephasing and depolarizing dynamics in Sec.~\ref{dephasing channel} and Sec.~\ref{depolasirprobability}, respectively. We show that all reduced trajectory weights remain nonnegative for both channels irrespective of the initial state parameters. 

\subsection{\label{Bell diagonal state}Bell-diagonal states}
We consider a two-qubit Bell-diagonal state of the form~\cite{Horodecki1995}
\begin{equation}
\rho_S=\frac{1}{4}\left(
\mathbb{I}_{S_1}\otimes\mathbb{I}_{S_2}
+\sum_{\alpha=1}^3c_\alpha\,
\sigma^\alpha_{S_1}\otimes\sigma^\alpha_{S_2}
\right),
\end{equation}
where $\sigma^\alpha$ ($\alpha=1,2,3$) are the three Pauli operators and $c_\alpha\in\mathbb{R}$ are the correlation coefficients. Positivity of $\rho_S$ requires
\begin{subequations}
\label{eq:Bell_diagonal_c_inequality}
\begin{align}
1-c_1-c_2-c_3&\geq0,\\
1-c_1+c_2+c_3&\geq0,\\
1+c_1-c_2+c_3&\geq0,\\
1+c_1+c_2-c_3&\geq0.
\end{align}
\end{subequations}
These four inequalities are equivalent to the nonnegativity of the four Bell-state populations given below. Bell-diagonal states are so named because they are diagonal in the Bell basis, whose vectors are
\begin{subequations}
\label{eq:Bell-states}
\begin{align}
\ket{\Phi^\pm}&=\frac{1}{\sqrt{2}}
\left(\ket{00}\pm\ket{11}\right),\\
\ket{\Psi^\pm}&=\frac{1}{\sqrt{2}}
\left(\ket{01}\pm\ket{10}\right).
\end{align}
\end{subequations}
Equivalently, the four Bell states are the eigenstates of $\rho_S$. We denote the eigenstate labels by $k\in\{\Phi^\pm,\Psi^\pm\}$. Introducing the variables $u=c_1+c_2$ and $v=c_1-c_2$, Bell-diagonal states have the corresponding eigenvalues
\begin{equation}
\label{eq:pk}
p_{k=\Phi^\pm}=\frac{1}{4}\left(1+c_3\pm v\right),\quad 
p_{k=\Psi^\pm}=\frac{1}{4}\left(1-c_3\pm u\right).
\end{equation}
Since Bell-diagonal states are classical mixtures of the four Bell states, their partial traces give the maximally mixed local states, namely $\rho_{S_1}=\rho_{S_2}=\mathbb{I}/2$. The eigenvalues of the local states are all $p_{s_j=0}=p_{s_j=1}=1/2$. Because the local states are degenerate, their spectral projectors are not uniquely fixed by the density matrices. Throughout this work, we choose the computational basis $\Pi_{s_j}=\ket{s_j}\!\bra{s_j}$ with $s_j\in\{0,1\}$. Correspondingly, the joint local state projectors are denoted as $\Pi_s=\Pi_{s_1}\otimes\Pi_{s_2}=\ket{s}\!\bra{s}$ with $s\in\{00,01,10,11\}$.


Bell-diagonal states form one of the simplest two-qubit state families while retaining a rich structure of quantum correlations and permitting a fully analytic treatment. The positivity conditions in Eq.~\eqref{eq:Bell_diagonal_c_inequality} define a tetrahedron in $(c_1,c_2,c_3)$ space, whose vertices correspond to the four pure Bell states. The inscribed octahedron, defined by $|c_1|+|c_2|+|c_3|\leq1$, is the separable subset~\cite{peres1996separability}. Thus, if $p_{\max}=\max_k p_k$, entanglement is present precisely when $p_{\max}>1/2$, and the concurrence is $\max\{0,2p_{\max}-1\}$~\cite{Wootters1998}. By contrast, CHSH nonlocality requires the stricter condition $\max_{\alpha<\beta}(c_\alpha^2+c_\beta^2)>1$~\cite{Horodecki1995}. Hence, an entangled Bell-diagonal state need not violate the CHSH inequality. This family therefore provides a standard solvable setting for entanglement~\cite{Horodecki2009}, CHSH nonlocality, and quantum discord~\cite{Luo2008,LangCaves2010}. Local Pauli noise preserves the Bell-diagonal form and reduces the open-system evolution to the decay of a small number of correlation coefficients. This property makes Bell-diagonal states particularly convenient for analyzing the properties of KD trajectory quasiprobabilities.

\subsection{\label{dephasing channel}Quasiprobability trajectories under local dephasing}

The single-qubit dephasing channel characterizes the decay of off-diagonal coherence in the basis $\{|0\rangle,|1\rangle\}$ \cite{NielsenChuang2010}. It has the Kraus operators
\begin{equation}
K_0=\sqrt{\lambda}\,\mathbb{I},\quad K_1=\sqrt{1-\lambda}\ket{0}\!\bra{0},\quad K_2=\sqrt{1-\lambda}|1\rangle\langle 1|,
\end{equation}
where $0\leq \lambda \leq 1$ is the dephasing parameter. The channel leaves the diagonal matrix elements unchanged and multiplies each off-diagonal element by $\lambda$. Thus, $\lambda=1$ gives the identity channel, whereas $\lambda=0$ removes all coherence in the computational basis.

Applying the same dephasing channel independently to both qubits yields the final state
\begin{multline}
\rho'_S=\frac{1}{4}\Bigl(
\mathbb{I}_{S_1}\otimes\mathbb{I}_{S_2}
+\lambda^2c_1\,\sigma^1_{S_1}\otimes\sigma^1_{S_2}\\
{}+\lambda^2c_2\,\sigma^2_{S_1}\otimes\sigma^2_{S_2}
+c_3\,\sigma^3_{S_1}\otimes\sigma^3_{S_2}
\Bigr).
\end{multline}
The state therefore remains Bell-diagonal. Specifically, the transverse correlations $c_1$ and $c_2$ acquire the decay factor $\lambda^2$, while the longitudinal correlation $c_3$ is unchanged. The local marginals remain $\rho'_{S_1}=\rho'_{S_2}=\mathbb{I}/2$. The final global eigenvectors are the Bell states in Eq.~\eqref{eq:Bell-states}, with populations
\begin{subequations}
\label{eq:pk-prime}
\begin{align}
p_{k'=\Phi^\pm}
&=\frac{1}{4}\left(1+c_3\pm\lambda^2v\right),\\
p_{k'=\Psi^\pm}
&=\frac{1}{4}\left(1-c_3\pm\lambda^2u\right).
\end{align}
\end{subequations}
Recall that $u=c_1+c_2$ and $v = c_1-c_2$. 

Direct calculations show that the transition KD quasiprobability $q[\gamma]$ in the operator-sum representation of Eq.~\eqref{eq:q-operator-sum} is nonzero only if $s'=s$. For fixed $k$ and $s$, the only possible final Bell states lie in the same parity sector as $k$. Correspondingly, there are only two values of the transition KD quasiprobability, given by
\begin{equation}
q[\gamma]=
\begin{cases}
\dfrac{1+\lambda^2}{4},&k'=k,\\[5pt]
\dfrac{1-\lambda^2}{4},&k'\neq k.
\end{cases}
\label{eq:dephasing-q-values}
\end{equation}
The larger value of $q[\gamma]$ corresponds to trajectories that remain in the initial Bell state, whereas the smaller value corresponds to trajectories that switch to its partner within the same Bell pair. Correspondingly, the KD quasiprobabilities $\mathcal Q_S[\gamma]$ given by
\begin{equation}
\mathcal Q_S[\gamma]
=\frac{p_k}{4}\left(1\pm\lambda^2\right)\geq0,
\label{eq:QS-dephasing}
\end{equation}
are nonnegative. It is straightforward to verify that summing the KD quasiprobability weights $\mathcal Q_S[\gamma]$ corresponding to the stochastic mutual information change $\Delta\iota[\gamma]$ recovers both the average mutual information dissipation in Eq.~\eqref{eq:Qs_avg} and the integral FT in Eq.~\eqref{eq:Qs_fluct}. Figure~\ref{fig:result1}(a) illustrates the nonnegative trajectory weights for a representative Bell-diagonal state.

\begin{figure}
\centering
\includegraphics[width=\linewidth]{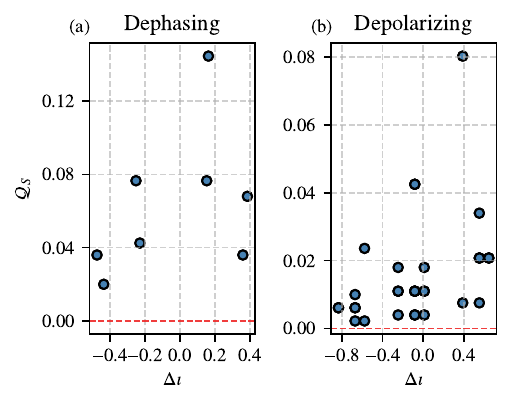}
\caption{The quasiprobability distribution $\mathcal Q_S[\gamma]$ of stochastic mutual information change $\Delta\iota[\gamma]$ for initial Bell-diagonal states with correlation coefficients $c_1=-0.3$, $c_2=-0.1$, and $c_3=-0.3$. The dynamics of each qubit is described by (a) dephasing with $\lambda=0.6$ and (b) depolarization with $\eta=7/15$. The red dashed line marks zero.}
\label{fig:result1}
\end{figure}

\subsection{\label{depolasirprobability}Quasiprobability trajectories under local depolarizing noise}

Next we consider the depolarizing channel, a typical single-qubit channel that drives a state towards the completely mixed state \cite{NielsenChuang2010}. Its Kraus operators are
\begin{equation}
K_0=\sqrt{\frac{1+3\eta}{4}}\,\mathbb{I},
\qquad
K_\alpha=\sqrt{\frac{1-\eta}{4}}\,\sigma^\alpha,
\end{equation}
where $\alpha=1,2,3$, while complete positivity requires $-1/3\leq\eta\leq1$. The depolarizing channel rescales the Pauli matrices, namely $\mathcal E(\sigma^\alpha)=\eta\,\sigma^\alpha$. It therefore preserves the Bell-diagonal structure and gives the final state
\begin{multline}
\rho'_S = \frac{1}{4}\Bigl(
\mathbb{I}_{S_1}\otimes\mathbb{I}_{S_2}
+\eta^2c_1\,\sigma^1_{S_1}\otimes\sigma^1_{S_2}\\
{}+\eta^2c_2\,\sigma^2_{S_1}\otimes\sigma^2_{S_2}
+\eta^2c_3\,\sigma^3_{S_1}\otimes\sigma^3_{S_2}
\Bigr),
\end{multline}
which yields a homogeneous contraction of the correlation coefficients $c_\alpha$. Correspondingly, the final Bell-state populations are
\begin{subequations}
\label{eq:p_eigenvalue_depolarzing}
\begin{align}
p_{k'=\Phi^\pm}
&=\frac{1}{4}\left(1+\eta^2(c_3\pm v)\right),\\
p_{k'=\Psi^\pm}
&=\frac{1}{4}\left(1+\eta^2(-c_3\pm u)\right).
\end{align}
\end{subequations}
The common factor $\eta^2$ multiplying the parameters $c_3$, $u$, and $v$ reflects the channel's tendency to drive the state towards the maximally mixed state, erasing all information about the initial state.

Direct calculation shows that the transition KD quasiprobability $q[\gamma]$ can take only four distinct nonzero values,
\begin{subequations}
\label{eq:kappa_parameter}
\begin{align}
\kappa_1&=\frac{1+2\eta+5\eta^2}{16},\\
\kappa_2&=\frac{(1-\eta)(1+3\eta)}{16},\\
\kappa_3&=\frac{(1-\eta)^2}{16},\\
\kappa_4&=\frac{1-\eta^2}{16}.
\end{align}
\end{subequations}
All four coefficients are nonnegative throughout the admissible range $-1/3\leq\eta\leq1$.

The four distinct values of $q[\gamma]$ correspond to distinct changes in the global Bell-state label and the local computational-basis label. The coefficient $\kappa_1$ describes trajectories that preserve both labels, for example, $q[\Phi^+,00,\Phi^+,00]=\kappa_1$. The coefficient $\kappa_2$ corresponds to a sign change within a Bell pair, $\Phi^+\leftrightarrow\Phi^-$ or $\Psi^+\leftrightarrow\Psi^-$, while the local label is unchanged. A representative trajectory is $q[\Phi^+,00,\Phi^-,00]=\kappa_2$. For $\kappa_3$, the Bell-state label is unchanged, whereas the local label is exchanged between the two correlated computational-basis states in the same Bell sector, with $00\leftrightarrow11$ in the $\Phi$ sector and $01\leftrightarrow10$ in the $\Psi$ sector. For instance, $q[\Phi^+,00,\Phi^+,11]=\kappa_3$. Finally, $\kappa_4$ describes transitions between the $\Phi$ and $\Psi$ sectors, accompanied by an exchange between the local-label sets $\{00,11\}$ and $\{01,10\}$. One example is $q[\Phi^+,00,\Psi^+,01]=\kappa_4$. The complete list of nonzero trajectories is given in Appendix~\ref{app:depolarizing_quasiprobability}.

Combining the transition KD quasiprobability $q[\gamma]$ with the initial Bell-state populations, we obtain the complete KD quasiprobability trajectories
\begin{equation}
\mathcal Q_S[\gamma]=p_k q[\gamma]\geq0,
\qquad
q[\gamma]\in\{\kappa_1,\kappa_2,\kappa_3,\kappa_4\},
\label{eq:QS-depolarizing}
\end{equation}
which are all nonnegative. Figure~\ref{fig:result1}(b) illustrates the nonnegative trajectory weights of $\mathcal Q_S[\gamma]$. Taken together, the dephasing and depolarizing examples show that the incompatibility of the global Bell projectors and the fixed local projectors does not by itself force the reduced quasiprobability to become negative, irrespective of the initial Bell-diagonal-state parameters.

\section{\label{sec:decay}Quasiprobability distribution under the amplitude-damping channel}


In this section, we focus on the KD quasiprobability trajectories associated with the amplitude-damping channel in Sec.~\ref{amplitude damping channel}. We then analyze the trajectory weights by identifying the most negative transition KD quasiprobability $q[\gamma]$ in Sec.~\ref{Minimal negative value of the transition quasiprobability} and the most negative reduced KD quasiprobability trajectory $\mathcal Q_S[\gamma]$ in Sec.~\ref{minimal negative}.


\subsection{\label{amplitude damping channel}Quasiprobability trajectories under local amplitude damping}

The single-qubit amplitude-damping channel models energy relaxation from the excited state $\ket{1}$ to the ground state $\ket{0}$ through coupling to a zero-temperature environment, also known as a decay channel~\cite{NielsenChuang2010}. Its Kraus operators are
\begin{equation}
\label{eq:kraus_amplitude_damping}
K_0=\ket{0}\!\bra{0}+\sqrt{1-\mu}\,\ket{1}\!\bra{1},
\qquad
K_1=\sqrt{\mu}\,\ket{0}\!\bra{1},
\end{equation}
where $0\leq\mu\leq1$ is the damping probability and $K_0^\dagger K_0+K_1^\dagger K_1=\mathbb I$. We set $\bar{\mu}=1-\mu$. Applying the amplitude damping channel independently to the two qubits gives
\begin{equation}
\label{eq:rho'_S_decay}
\rho'_S=
\frac14
\begin{pmatrix}
\rho_{11} & 0 & 0 & \bar{\mu}v \\[3pt]
0 & \rho_{22} & \bar{\mu}u & 0 \\[3pt]
0 & \bar{\mu}u & \rho_{22} & 0 \\[3pt]
\bar{\mu}v & 0 & 0 & \rho_{44}
\end{pmatrix},
\end{equation}
where the scaled diagonal coefficients are
\begin{subequations}
\begin{align}
\rho_{11} &= c_3\bar{\mu}^2+(1+\mu)^2,\\
\rho_{22} &= \bar{\mu}\bigl(1-c_3\bar{\mu}+\mu\bigr),\\
\rho_{44} &= \bar{\mu}^2\bigl(1+c_3\bigr).
\end{align}
\end{subequations}
Recall that $u=c_1+c_2$ and $v=c_1-c_2$, as defined in Sec.~\ref{Bell diagonal state}. The final state is an $X$ state rather than, in general, a Bell-diagonal state. In particular, its local marginals are no longer maximally mixed, since the channel drives each qubit to the $|0\rangle$ state. Specifically, the two reduced states remain identical and diagonal in the computational basis, given by
\begin{equation}
\label{eq:decay_local_eigenvalues}
\rho'_{S_1}=\rho'_{S_2}=
\frac12
\begin{pmatrix}
1+\mu & 0 \\
0 & 1-\mu
\end{pmatrix}.
\end{equation}
We can directly read off the eigenvalues of $\rho'_{S_j}$ as $p_{s_j'=0}=(1+\mu)/2$ and $p_{s_j'=1}=(1-\mu)/2$ for $j=1,2$. The associated local spectral projectors are uniquely the computational-basis projectors $\Pi_{s_j'=0}=\ket{0}\!\bra{0}$ and $\Pi_{s_j'=1}=\ket{1}\!\bra{1}$. 

Although the final state $\rho_S'$ in Eq. \eqref{eq:rho'_S_decay} breaks the Bell-diagonal structure, the $X$ structure admits an analytical treatment of the eigenvalues and eigenvectors. Specifically, the eigenvalues are
\begin{subequations}
\label{eq:decay_global_eigenvalues}
\begin{align}
p_{k'=\Psi^\pm}&=\frac{\bar{\mu}}{4}\left(2\pm u-\bar{\mu}\left(1+c_3\right)\right),\\
p_{k'=\Omega^\pm}&=\frac14\left(A\pm B\right),
\end{align}
\end{subequations}
with
\begin{equation}
\label{eq:A_and_B}
A=(1+c_3)\left(1+\mu^2\right)-2c_3\mu,
\quad
B=\sqrt{\bar{\mu}^2v^2+4\mu^2}.
\end{equation}
The eigenvalues labeled by $\Psi^\pm$ are associated with the Bell states $|\Psi^\pm\rangle$, whereas those labeled by $\Omega^\pm$ correspond to the eigenvectors
\begin{equation}
\label{eq:eigenvectors}
\ket{\Omega^+}=\omega^+_{00}\ket{00}+\omega^+_{11}\ket{11}, \quad \ket{\Omega^-}=\omega^-_{00}\ket{00}+\omega^-_{11}\ket{11},
\end{equation}
where
\begin{subequations}
\begin{align}
\omega^\pm_{00}&=\frac{2\mu\pm B}{\sqrt{(v\bar{\mu})^2+(2\mu\pm B)^2}},\\
\omega^\pm_{11}&=\frac{v\bar{\mu}}{\sqrt{(v\bar{\mu})^2+(2\mu\pm B)^2}}.
\end{align}
\end{subequations}
For $v=0$, the expressions for $\ket{\Omega^\pm}$ are understood in the continuous limit, in which they become the computational states.

With the spectral information of the local and global initial and final states, we can obtain the reduced KD quasiprobability trajectories defined in Eq.~\eqref{eq:Q_S}. The full analytical list of the nonzero KD quasiprobability trajectories is presented in Appendix~\ref{app:decay_quasiprobability}. Most of the weights are nonnegative regardless of the parameter values. For $0<\mu<1$, we find that two trajectories can be negative whenever $v=c_1-c_2\neq 0$. Specifically, for $v>0$, the two negative trajectories are
\begin{subequations}
\label{eq:Q_S_v>0}
\begin{align}
\label{eq:Q_S_v>0_1}
\mathcal Q_S[\Phi^+,00,\Omega^-,00]
&=\frac12\left(\bigl(\omega^-_{00}\bigr)^2+\bar{\mu}\omega^-_{00}\omega^-_{11}\right)p_{k=\Phi^+},\\
\label{eq:Q_S_v>0_2}
\mathcal Q_S[\Phi^-,11,\Omega^+,11]
&=\frac12\left(-\bar{\mu}\omega^+_{00}\omega^+_{11}+\bar{\mu}^2\bigl(\omega^+_{11}\bigr)^2\right)p_{k=\Phi^-}.
\end{align}
\end{subequations}
For $v<0$, the negative trajectories are
\begin{subequations}
\label{eq:Q_S_v<0}
\begin{align}
\mathcal Q_S[\Phi^-,00,\Omega^-,00]
&=\frac12\left(\bigl(\omega^-_{00}\bigr)^2-\bar{\mu}\omega^-_{00}\omega^-_{11}\right)p_{k=\Phi^-},\\
\mathcal Q_S[\Phi^+,11,\Omega^+,11]
&=\frac12\left(\bar{\mu}\omega^+_{00}\omega^+_{11}+\bar{\mu}^2\bigl(\omega^+_{11}\bigr)^2\right)p_{k=\Phi^+}.
\end{align}
\end{subequations}
See Fig.~\ref{fig:result2} for the negative KD quasiprobability trajectory weights.

\begin{figure}
    \centering
    \includegraphics[width=1\linewidth]{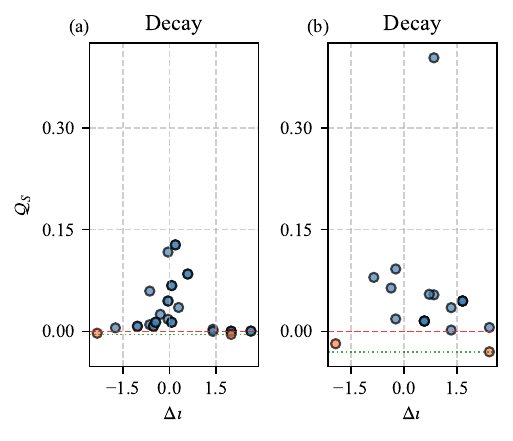}
    \caption{The reduced KD quasiprobability distribution $\mathcal Q_S[\gamma]$ of the stochastic mutual information change $\Delta\iota[\gamma]$ for Bell-diagonal states under local amplitude damping. In both panels, the damping parameter is $\mu=0.4$. Orange points identify negative trajectory weights, and the green dotted line in each panel marks the corresponding minimum trajectory weight. Panel (a) uses the representative state $c_1=c_3=-0.3$ and $c_2=-0.1$. Panel (b) uses the $v>0$ minimizing family in Theorem~\ref{thm:global-most-negative}, with $c_3=1$ and $c_1=-c_2\simeq0.49201$. }
    \label{fig:result2}
\end{figure}

For $v=0$, the initial even-parity coherence $\bra{00}\rho_S\ket{11}=v/4$ vanishes and remains absent under the amplitude-damping channel. The final even-parity block is then diagonal in the computational basis, so $\ket{\Omega^+}$ and $\ket{\Omega^-}$ reduce to $\ket{00}$ and $\ket{11}$, respectively, and $\omega^\pm_{00}\omega^\pm_{11}=0$. Therefore, the incompatibility between the eigenprojectors disappears, and all nonzero trajectory weights reduce to nonnegative classical transition probabilities. 

For example, the maximally entangled and Bell-nonlocal state $\ket{\Psi^+}$ has $(c_1,c_2,c_3)=(1,1,-1)$ and hence $v=0$, so it does not generate the negative trajectories identified here. Conversely, the separable Bell-diagonal state $(c_1,c_2,c_3)=(0.3,0,0)$ lies inside the separable octahedron but has $v=0.3\neq0$ and does generate such trajectories for $0<\mu<1$. This explicit contrast shows that the trajectory negativity is not determined by either entanglement or CHSH nonlocality.

\subsection{\label{Minimal negative value of the transition quasiprobability}Most negative transition quasiprobability trajectory}

The reduced KD quasiprobability trajectory weight can be written as $\mathcal Q_S[\gamma]=p_kq[\gamma]$. Its negativity arises solely from the transition KD quasiprobability $q[\gamma]$. It is therefore natural to ask how negative $q[\gamma]$ can become. The minimization of the negative trajectory weights is symmetric under the exchange $c_1\leftrightarrow c_2$, which maps $v$ to $-v$ while leaving $u$ and $c_3$ invariant and exchanging $\Phi^+\leftrightarrow\Phi^-$. The $v<0$ trajectories are therefore in one-to-one correspondence with those for $v>0$. In the following, we restrict the analysis to the regime $v>0$. 

Formally, we find that the most negative transition KD quasiprobability $q[\gamma]$ is given by the following parameters. 

\begin{theorem}
\label{thm:global-transition-minimum}
The smallest negative transition quasiprobability $q[\gamma]$ is given by the trajectory $q[\Phi^-,11,\Omega^+,11]$ at $v = v_{q_{\min}}$ with
\begin{equation}
\label{eq:transition-q-minimizer}
v_{q_{\min}}=
\begin{cases}
\dfrac{2\mu}{\bar{\mu}^2}, & 0<\mu\leq\mu_q,\\[6pt]
2, & \mu_q\leq\mu<1,
\end{cases}
\end{equation}
where $\mu_q=(3-\sqrt 5)/2$. The corresponding minimum value $q_{\min} = q[\Phi^-,11,\Omega^+,11]$ is
\begin{equation}
\label{eq:transition-q-min-value}
q_{\min}=
\begin{cases}
\dfrac{\bar{\mu}}{4}
\left(\bar{\mu}-\sqrt{1+\bar{\mu}^2}\right),
& 0<\mu\leq\mu_q,\\[8pt]
\dfrac{\bar{\mu}^2
\left(\sqrt{\mu^2+\bar{\mu}^2}-1-\mu\right)}
{4\sqrt{\mu^2+\bar{\mu}^2}},
& \mu_q\leq\mu<1.
\end{cases}
\end{equation}
\end{theorem}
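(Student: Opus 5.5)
The plan is to reduce both candidate negative transition quasiprobabilities of Eq.~\eqref{eq:Q_S_v>0} to functions of a single mixing angle, compare them pointwise, and then minimize the survivor over the admissible range of $v$. By the $c_1\leftrightarrow c_2$ symmetry discussed above, it suffices to take $v>0$. According to the analysis of Sec.~\ref{amplitude damping channel} and Appendix~\ref{app:decay_quasiprobability}, every other nonzero $q[\gamma]$ is nonnegative. Only
\[
q_1=\tfrac12\bigl((\omega^-_{00})^2+\bar\mu\,\omega^-_{00}\omega^-_{11}\bigr)
\quad\text{and}\quad
q_2=\tfrac12\bigl(-\bar\mu\,\omega^+_{00}\omega^+_{11}+\bar\mu^2(\omega^+_{11})^2\bigr)
\]
can be negative. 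Both depend only on $(\mu,v)$, through $B$. The admissible range is $0<v\le 2$, because the positivity conditions \eqref{eq:Bell_diagonal_c_inequality} give $v\le 1+c_3\le 2$, and $v=2$ is attained at $c_3=1$, $c_1=-c_2=1$.

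\textbf{Step 1: angle parametrization.} The even-parity block of $\rho'_S$ in Eq.~\eqref{eq:rho'_S_decay} is $\tfrac14\begin{pmatrix}\rho_{11}&\bar\mu v\\ \bar\mu v&\rho_{44}\end{pmatrix}$. A short computation gives $\rho_{11}-\rho_{44}=4\mu$. Hence its eigenvectors are a rotation by an angle $\theta\in(0,\pi/4)$ with
\[
\tan\varphi=\frac{\bar\mu v}{2\mu},\qquad \varphi=2\theta\in(0,\pi/2).
\]
Matching the signs of $\omega^\pm_{00},\omega^\pm_{11}$ (note $2\mu-B<0$), I would write $\ket{\Omega^+}=(\cos\theta,\sin\theta)$ and $\ket{\Omega^-}=(-\sin\theta,\cos\theta)$. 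Double-angle formulas then give
\[
q_1=\tfrac14\bigl(1-\cos\varphi-\bar\mu\sin\varphi\bigr),\qquad
q_2=\tfrac14\bigl(\bar\mu^2(1-\cos\varphi)-\bar\mu\sin\varphi\bigr).
\]
Therefore $q_1-q_2=\tfrac14(1-\bar\mu^2)(1-\cos\varphi)\ge 0$ for every $\varphi$. This shows that $q[\Phi^-,11,\Omega^+,11]=q_2$ is always the more negative of the two, so the global minimum is a minimum of $q_2$ alone.

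\textbf{Step 2: minimize $q_2$.} I would rewrite
\[
q_2=\tfrac{\bar\mu}{4}\Bigl(\bar\mu-\sqrt{1+\bar\mu^2}\,\cos(\varphi-\varphi^*)\Bigr),\qquad \tan\varphi^*=1/\bar\mu,
\]
where $\varphi^*\in(0,\pi/2)$. On $(0,\pi/2)$ this function is strictly decreasing for $\varphi<\varphi^*$ and increasing afterwards. Since $\varphi$ is monotone in $v$, the range $0<v\le2$ corresponds to $0<\varphi\le\varphi_{\max}$ with $\tan\varphi_{\max}=\bar\mu/\mu$.

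\emph{Interior case.} The interior minimizer $\varphi^*$ corresponds to $v=2\mu/\bar\mu^2$, which gives $q_{\min}=\tfrac{\bar\mu}{4}(\bar\mu-\sqrt{1+\bar\mu^2})$. This point is feasible iff $2\mu/\bar\mu^2\le 2$, i.e.\ $\mu^2-3\mu+1\ge0$, i.e.\ $\mu\le\mu_q=(3-\sqrt5)/2$.

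\emph{Boundary case.} For $\mu\ge\mu_q$, monotonicity forces the minimum to the endpoint $v=2$. There $\cos\varphi=\mu/\sqrt{\mu^2+\bar\mu^2}$ and $\sin\varphi=\bar\mu/\sqrt{\mu^2+\bar\mu^2}$. Substituting gives
\[
q_{\min}=\frac{\bar\mu^2\bigl(\sqrt{\mu^2+\bar\mu^2}-1-\mu\bigr)}{4\sqrt{\mu^2+\bar\mu^2}}.
\]
At $\mu=\mu_q$ the two branches coincide, since there $\bar\mu^2=\mu$.

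\textbf{Main obstacle.} The delicate part is the sign bookkeeping in Step 1. I would need to confirm that the normalized $\omega^\pm$ from Eq.~\eqref{eq:eigenvectors} really equal $(\cos\theta,\sin\theta)$ and $(-\sin\theta,\cos\theta)$ with $\theta\in(0,\pi/4)$ for $v>0$. I would also need the exact half-angle relation $\tan 2\theta=\bar\mu v/(2\mu)$. I would check both by verifying $\omega^+_{11}/\omega^+_{00}=\tan\theta$ directly from $\bar\mu v/(2\mu+B)$ using $B=\sqrt{\bar\mu^2v^2+4\mu^2}$. A second point to confirm is that the appendix list indeed contains no other sign-indefinite $q[\gamma]$, so that the comparison in Step 1 is exhaustive.
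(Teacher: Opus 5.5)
Your proposal is correct, and it reaches the theorem by a different route from the paper's.

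\textbf{The paper's route.} It stays in the variable $v$, writing $q_+=(B-2\mu-\bar\mu^2v)/(4B)$ and $q_-=\bar\mu^2(B-2\mu-v)/(4B)$. It differentiates $q_+$ to find its minimum at $v=2\mu$, then checks $q_+(2\mu)>q_-(2\mu)$, which gives $\min q_-<\min q_+$. Finally it reads off the minimizer of $q_-$ from the sign of $dq_-/dv=-\bar\mu^2\mu(2\mu-\bar\mu^2v)/(2B^3)$.

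\textbf{Your route.} Your substitution $\cos\varphi=2\mu/B$, $\sin\varphi=\bar\mu v/B$ reproduces exactly these two closed forms. Your sign matching $\ket{\Omega^+}=(\cos\theta,\sin\theta)$, $\ket{\Omega^-}=(-\sin\theta,\cos\theta)$ is right, since $\omega^+_{11}/\omega^+_{00}=\bar\mu v/(2\mu+B)=\tan\theta$ and $2\mu-B<0$. The substitution linearizes both expressions, which buys two things:
\begin{enumerate}
\item The comparison becomes the pointwise identity $q_1-q_2=\tfrac14(1-\bar\mu^2)(1-\cos\varphi)>0$. This is stronger than the paper's comparison of minima.
\item The minimization becomes a harmonic-addition statement with no derivative algebra. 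It makes immediate the unimodality, the minimizer $v=2\mu/\bar\mu^2$, the threshold $\mu_q$ (via $\tan\varphi_{\max}=\bar\mu/\mu\le 1/\bar\mu\Leftrightarrow\bar\mu^2\le\mu$), and the matching of the two branches at $\mu_q$.
\end{enumerate}
The paper's calculus route is more mechanical but needs no reparametrization.

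\textbf{Exhaustiveness.} The paper simply asserts this point as well. In your parametrization it is a quick check: every other entry of Appendix~\ref{app:decay_quasiprobability} is manifestly nonnegative except two.
\begin{itemize}
\item $q[\Phi^-,00,\Omega^+,00]=\tfrac12\cos\theta(\cos\theta-\bar\mu\sin\theta)$ is positive because $\theta<\pi/4$.
\item $q[\Phi^+,11,\Omega^-,11]=\tfrac12\bar\mu\cos\theta(\bar\mu\cos\theta-\sin\theta)$ is positive because $\tan\theta<\bar\mu$. That condition is equivalent to $v<2\mu+B$, which follows from $B>\bar\mu v$ and $v\le2$.
\end{itemize}
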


The proof proceeds by comparing the two negative trajectories to identify $q[\Phi^-,11,\Omega^+,11]$ as the lower one and then minimizing it over the parameter regime $0<v\leq2$. A complete proof is given in Appendix~\ref{app:proof_theorem_1}.

For $0<\mu<1$, Eq.~\eqref{eq:transition-q-min-value} shows that $q_{\min}$ increases monotonically from the limiting value $(1-\sqrt{2})/4$ as $\mu\to0^+$ towards zero as $\mu\to1^-$. Thus, the largest magnitude of the negative transition coefficient is approached, but not attained, for arbitrarily weak nonzero damping. At the endpoint $\mu=0$, $K_0=\mathbb I$ and $K_1=0$, so the channel is the identity map and all transition quasiprobabilities are nonnegative. The minimization concerns the transition coefficient alone. For $\mu\geq\mu_q$, the optimum occurs at $v=2$, where this representative state becomes the boundary state $\ket{\Phi^+}\!\bra{\Phi^+}$ and $p_{\Phi^-}=0$. Consequently, the complete weight $\mathcal Q_S[\Phi^-,11,\Omega^+,11]$ vanishes at this boundary. This behavior is distinct from the limit $\mu\to0^+$, where $v_{q_{\min}}\to0$ and $p_{\Phi^-}\to1/2$. The complete weight then approaches $(1-\sqrt{2})/8$, rather than vanishing. Note that the transition KD quasiprobability has the parameter-independent lower bound $q[\gamma]>(1-\sqrt{2})/4\approx -0.10355$. 

The finite limiting $\mu\to0^+$ should not be interpreted as a discontinuity of the amplitude-damping channel at the identity map. For every fixed $v>0$, the trajectories in Appendix~\ref{app:proof_theorem_1} satisfy $q_\pm(v,\mu)\to0$ as $\mu\to0^+$. The nonzero limiting arises because the optimizing state simultaneously approaches the degenerate point $v=0$, with $v_{q_{\min}}=2\mu/\bar\mu^2\sim2\mu$. Hence, optimization over the initial state and the channel limit do not commute.

\subsection{\label{minimal negative}Most negative reduced quasiprobability trajectory}

The trajectory $[\Phi^-,11,\Omega^+,11]$ with the most negative transition KD quasiprobability $q[\Phi^-,11,\Omega^+,11]$ does not yield the most negative reduced quasiprobability $\mathcal Q_S[\gamma]$. For $v>0$, Eq.~\eqref{eq:pk} gives $p_{\Phi^-}<p_{\Phi^+}$, so the initial-state populations shift the minimum of $\mathcal Q_S[\gamma]=p_kq[\gamma]$ to a different trajectory. We therefore optimize the full reduced KD quasiprobability trajectory separately. The $v<0$ case follows by the exchange $c_1\leftrightarrow c_2$, which maps $v$ to $-v$ and relabels $\Phi^+$ and $\Phi^-$. Specifically, the most negative KD quasiprobability trajectory $\mathcal Q_S[\gamma]$ is given by the following theorem. 
\begin{theorem}
\label{thm:global-most-negative}
For $v>0$ the most negative reduced KD quasiprobability trajectory is $\mathcal Q_S[\Phi^+,00,\Omega^-,00]$, attained for $c_1=-c_2=v_{\mathcal Q_{\min}}/2$ and $c_3=1$, with
\begin{equation}
\label{eq:rstar-definition}
v_{\mathcal Q_{\min}}=
\begin{cases}
\nu_0, & 0<\mu<\mu_c,\\[4pt]
2, & \mu_c\leq\mu<1,
\end{cases}
\end{equation}
where $\mu_c\approx 0.82668$ and $\nu_0$ is the unique solution in $(2\mu,2)$ of
\begin{equation}
\label{eq:nu_0_equation}
1-\frac{2\mu+\bar{\mu}^2v}{B(v)}
-\frac{2\bar{\mu}^2\mu(2+v)(2\mu-v)}{B^3(v)}
=0.
\end{equation}
The $v$-dependent parameter $B(v)$ is defined in Eq. \eqref{eq:A_and_B}. 
\end{theorem}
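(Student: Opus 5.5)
Fix $0<\mu<1$ and $v>0$. By Eq.~\eqref{eq:Q_S_v>0} and the list in Appendix~\ref{app:decay_quasiprobability}, only two reduced weights can be negative. Both transition factors depend only on $(v,\mu)$, since $\omega^\pm_{00},\omega^\pm_{11}$ depend only on $v$ and $\mu$ through $B(v)$. The initial-state dependence enters only through $p_{\Phi^\pm}=(1+c_3\pm v)/4$. The plan is to minimize each weight over the tetrahedron~\eqref{eq:Bell_diagonal_c_inequality} and then compare the two minima.

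\emph{First step: reduce each weight to a function of $v$ alone.} For fixed $v$ and a negative transition factor, we want to maximize the relevant population.
\begin{itemize}
\item[] For $\mathcal Q_S[\Phi^+,00,\Omega^-,00]$ this means maximizing $1+c_3+v$. The constraints $1+c_1-c_2+c_3\le 2$ (i.e.\ $p_{\Phi^+}\le 1/2$, implied by the other three inequalities) and $c_3\le 1$ give $c_3=1$. This is admissible for every $v\in(0,2]$ with $u=0$, i.e.\ $c_1=-c_2=v/2$.
\item[] For $\mathcal Q_S[\Phi^-,11,\Omega^+,11]$, the factor $1+c_3-v$ is likewise maximized at $c_3=1$, giving $p_{\Phi^-}=(2-v)/4$.
\end{itemize}

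\emph{Second step: write the two reduced functions of $v$.} Substituting the explicit $\omega^\pm$ into Eq.~\eqref{eq:Q_S_v>0} at $c_3=1$ should give, using $(\omega^-_{00})^2=\tfrac12(1-2\mu/B)$ and $\omega^-_{00}\omega^-_{11}=-\bar\mu v/(2B)$,
\[
F_+(v)=\frac{2+v}{16}\Bigl(1-\frac{2\mu+\bar\mu^2 v}{B(v)}\Bigr).
\]
This is negative exactly when $v>2\mu$, because $(2\mu+\bar\mu^2v)^2-B^2=4\mu\bar\mu^2 v(1-\ldots)$, and that sign check is part of the computation. Similarly, $F_-(v)=\tfrac{2-v}{4}\,q_-(v)$, where $q_-$ is the transition factor studied in Theorem~\ref{thm:global-transition-minimum}.

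I would then show $F_+\le F_-$ pointwise, or at least $\min F_+<\min F_-$. The pointwise route is the natural one: by Theorem~\ref{thm:global-transition-minimum}, $q_-\ge q_{\min}$, while the factor $2-v$ is smaller than $2+v$. A direct comparison of the two closed forms, of the type used in Appendix~\ref{app:proof_theorem_1} to order the transition coefficients, should suffice.

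\emph{Third step: minimize $F_+$ on $(2\mu,2]$.} Using $B'=\bar\mu^2 v/B$, differentiate:
\[
16F_+'(v)=1-\frac{2\mu+\bar\mu^2v}{B}-(2+v)\,\frac{\bar\mu^2 B^2-(2\mu+\bar\mu^2 v)\bar\mu^2 v}{B^3}.
\]
The numerator of the last term simplifies to $\bar\mu^2\cdot 2\mu(2\mu-v)$, which reproduces Eq.~\eqref{eq:nu_0_equation} exactly.

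Next, show that $F_+'$ has at most one zero on $(2\mu,2)$. At $v=2\mu$ we have $F_+=0$ and $F_+'<0$, since the last term is zero there and the first bracket equals $1-(2\mu+2\mu\bar\mu^2)/B(2\mu)$. I would try to show that $F_+'$ is strictly increasing, i.e.\ that $F_+$ is convex on this interval, by differentiating once more and writing everything as a polynomial in $v$ over $B^5$.

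This uniqueness/convexity step is the main obstacle. If direct convexity fails, my fallback is to square away the radicals: write the zero condition as $P(v)=Q(v)B(v)$ with polynomials $P,Q$ of fixed degree, square it, and count roots in $(2\mu,2)$ via sign changes.

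Given uniqueness, the minimizer is $\nu_0$ if $F_+'(2)>0$ and $v=2$ otherwise. The threshold $\mu_c$ is defined by $F_+'(2)=0$, i.e.\ Eq.~\eqref{eq:nu_0_equation} evaluated at $v=2$, where $B(2)=2\sqrt{\mu^2+\bar\mu^2}$. Solving this equation numerically gives $\mu_c\approx0.82668$. Finally, I would check that $F_+'(2)$ changes sign only once in $\mu$, so that $\mu<\mu_c$ corresponds to an interior minimum and $\mu\ge\mu_c$ to the boundary $v=2$.
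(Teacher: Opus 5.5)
Your overall plan matches the paper's proof: fix $v$ and push the state to $c_3=1$, $u=0$; compare the two negative weights; minimize $F_+$; locate $\mu_c$ from $F_+'(2)=0$. However, two steps fail as you state them, and a third is only sketched.

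\textbf{The sign claim is false.} In fact $(2\mu+\bar\mu^2v)^2-B^2=\mu\bar\mu^2v\bigl(4-(2-\mu)v\bigr)>0$ for every $0<v\le2$. So $F_+<0$ on the whole interval $(0,2]$, and $F_+(2\mu)=\frac{1+\mu}{8}\bigl(1-\sqrt{1+\bar\mu^2}\bigr)<0$, not $0$. You appear to have conflated the stationary point $v=2\mu$ of the transition factor $q_+$ with a zero of $F_+$. As written, restricting the minimization to $(2\mu,2]$ is therefore unjustified. The missing step is that $F_+$ is strictly decreasing on $(0,2\mu]$. Write $16F_+'=4q_+ + 4(2+v)q_+'$. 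On $(0,2\mu)$ both $q_+<0$ and $q_+'=\bar\mu^2\mu(v-2\mu)/(2B^3)<0$, and at the endpoint $16F_+'(2\mu)=1-\sqrt{1+\bar\mu^2}<0$. This is exactly how the paper disposes of that interval.

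\textbf{The convexity route cannot work.} You want to show $F_+'$ has at most one zero on $(2\mu,2)$ by proving $F_+$ convex, but $F_+$ is not convex there in general. One finds
\[
16F_+''=\frac{2\mu\bar\mu^2}{B^5}\Bigl[(3v+2-4\mu)B^2-3\bar\mu^2v(2+v)(v-2\mu)\Bigr].
\]
At $\mu=0.1$, $v=2$ the bracket equals $24.928-34.992<0$. So $F_+$ is increasing but concave near $v=2$, and $F_+'$ is not monotone. Only the sign of $F_+'$ changes once. The way to see this is to strip off the positive factor first. Set $t=\bar\mu v/(2\mu)$ and $R=\sqrt{1+t^2}$. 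Then $16\mu R^3F_+'=G(t)=\mu(R^3-\bar\mu t^3)+\bar\mu(2\bar\mu-1)t-(\bar\mu^2-\bar\mu+1)$. Its derivative is $G'(t)=3\mu t(R-\bar\mu t)+\bar\mu(2\bar\mu-1)$, which is positive for $t\ge\bar\mu$. This is immediate if $\bar\mu\ge1/2$; otherwise use that $t(R-\bar\mu t)$ is increasing. Your squaring fallback is not carried out, and it would leave a $\mu$-dependent sextic with spurious roots to discard.

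\textbf{The comparison $F_+<F_-$ is only sketched, and your heuristic points the wrong way.} In fact $4B(q_+-q_-)=(1-\bar\mu^2)(B-2\mu)>0$, so $q_-<q_+$ for every $v>0$. The smaller population $(2-v)/4$ must therefore be traded off quantitatively. The identity $16B(F_+-F_-)=\bar\mu^2v^2\bigl(\frac{2(1-\bar\mu^2)+v(1+\bar\mu^2)}{B+2\mu}-2\bigr)$ reduces this to $2B+2\mu^2-v(1+\bar\mu^2)>0$. That holds because the left side is decreasing in $v$ and equals $4(\sqrt{\mu^2+\bar\mu^2}-\bar\mu)>0$ at $v=2$.

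A minor point: your parenthetical bound $p_{\Phi^+}\le1/2$ is false, since the pure state $\Phi^+$ has $p_{\Phi^+}=1$. The correct reason for $c_3=1$, $u=0$ is $c_3\le1-|u|$, obtained by adding the first and fourth positivity constraints.
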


The proof is given in Appendix~\ref{app:negative}. Once the minimizing parameter $v_{\mathcal Q_{\min}}$ has been determined, the corresponding minimum weight follows by direct substitution, which gives
\begin{equation}
\mathcal Q_{S,\min}
=
\begin{cases}
\dfrac{(2+\nu_0)\left(B(\nu_0)-2\mu-\bar{\mu}^2\nu_0\right)}
{16B(\nu_0)},
& 0<\mu<\mu_c,\\[8pt]
\dfrac{\sqrt{\bar{\mu}^2+\mu^2}-\mu-\bar{\mu}^2}
{4\sqrt{\bar{\mu}^2+\mu^2}},
& \mu_c\leq\mu<1,
\end{cases}
\label{eq:global-Qmin}
\end{equation}
where $\nu_0$ is the unique solution of Eq. \eqref{eq:nu_0_equation}. Mathematically, the optimum is an interior point for $0<\mu<\mu_c$ and moves continuously to the Bell-tetrahedron boundary at $\mu=\mu_c$. For $\mu_c\leq\mu<1$, the minimizing state is the pure Bell state $\ket{\Phi^+}\!\bra{\Phi^+}$.

Physically, the global minimum results from the competition between the negative transition KD quasiprobability $q[\gamma]$ and its associated initial Bell-state population $p_k$. Although $q[\Phi^-,11,\Omega^+,11]$ can be more negative, its smaller factor $p_{\Phi^-}$ prevents it from determining the minimum weight of $\mathcal Q_S[\gamma]$. The trajectory $[\Phi^+,00,\Omega^-,00]$ instead combines a negative contribution with the larger population $p_{\Phi^+}$. Figure~\ref{fig:result2}(a) illustrates a representative state whose negative points lie above the global minimum. For $\mu=0.4$, the theorem gives $v_{\mathcal Q_{\min}}\approx 0.98402$ and hence $c_1=-c_2\approx 0.49201$. This representative case is illustrated in Fig.~\ref{fig:result2}(b), where the lowest red point is the trajectory $\mathcal Q_S[\Phi^+,00,\Omega^-,00]$, which is appreciably lower than the lowest point in Fig.~\ref{fig:result2}(a). 

Numerical evaluation of Eq.~\eqref{eq:global-Qmin} shows that $\mathcal Q_{S,\min}$ increases monotonically with $\mu$ and approaches $(1-\sqrt{2})/8\approx-0.05178$ as $\mu\to0^+$. Therefore, for every Bell-diagonal initial state and every nonzero damping strength, the reduced KD quasiprobability satisfies the parameter-independent bound $\mathcal Q_S[\gamma]>(1-\sqrt{2})/8$. This bound is approached by $[\Phi^+,00,\Omega^-,00]$ as $\nu_0\to0^+$, for which $p_{\Phi^+}\to1/2$. It is distinct from the lower bound $q[\gamma]>(1-\sqrt{2})/4$ established above because the latter minimizes the transition coefficient without its initial-state population. The trajectory that minimizes $q[\gamma]$ is weighted by $p_{\Phi^-}\to1/2$ in the weak-damping limit, whereas the global optimization of $\mathcal Q_S[\gamma]=p_kq[\gamma]$ selects the trajectory with the more favorable population factor. Consequently, the lower bound for the complete weight is one half of the lower bound for the transition coefficient.

\section{\label{sec:neagative}Conditions for negative quasiprobability trajectories}

In this section, we first derive a diagonal-interference decomposition of the transition KD quasiprobability trajectory $q[\gamma]$ in Sec.~\ref{Interference decomposition of the quasiprobability} and formulate a sufficient condition for negative reduced KD quasiprobabilities. We then present another example in Sec.~\ref{Examples on negative quasiprobability} that explicitly yields a strictly negative reduced KD quasiprobability trajectory $\mathcal Q_S[\gamma]$.

\subsection{Interference decomposition of the transition quasiprobability trajectory \label{Interference decomposition of the quasiprobability}}

The channel-dependent results in Secs. \ref{sec:dephase_depolarizing} and \ref{sec:decay} show that incompatibility between the global and local projectors does not imply negative reduced KD quasiprobabilities $\mathcal Q_S[\gamma]$. Therefore it motivates us to identify the mechanism underlying negative weights. We introduce the following decomposition of $q[\gamma]$, which reveals a sufficient condition for negativity.  

For a trajectory $\gamma=[k,s,k',s']$, Eq.~\eqref{eq:QS-q-pk} gives $\mathcal Q_S[\gamma]=p_kq[\gamma]$. Since $p_k\geq0$, a negative transition quasiprobability $q[\gamma]$ produces a negative reduced weight whenever $p_k>0$. Conversely, a trajectory with $p_k=0$ carries zero reduced weight independently of the value assigned to its transition quasiprobability. We therefore focus on analyzing the sign of $q[\gamma]$. Recall that $\{\ket{s'}\}$ is the final local product eigenbasis. The final global eigenvector $\ket{k'}$ can be expanded as
\begin{equation}
\ket{k'}=z_{s'}\ket{s'}+\sum_{r'\neq s'}z_{r'}\ket{r'},
\label{eq:final-global-expansion}
\end{equation}
where $z_{r'}=\langle r'|k'\rangle$. Substitution into the amplitude representation in Eq.~\eqref{eq:q-operator-sum}, together with $\langle k'|s'\rangle=z_{s'}^*$, yields the decomposition
\begin{equation}
\label{eq:interference-decomposition}
q[\gamma] = q_\text{D}[\gamma] + q_\text{I}[\gamma],
\end{equation}
with
\begin{subequations}
\label{eq:interference-q}
\begin{align}
\label{eq:interference-D}
q_\text{D}[\gamma]
&=|z_{s'}|^2\bra{s'}\mathcal E_S(\Pi_s\Pi_k)\ket{s'}, \\
\label{eq:interference-I}
q_\text{I}[\gamma]
&=z_{s'}^*\sum_{r'\neq s'}
\bra{s'}\mathcal E_S(\Pi_s\Pi_k)\ket{r'}z_{r'}.
\end{align}
\end{subequations}
The term $q_\text{D}[\gamma]$ is diagonal in the final local basis, while $q_\text{I}[\gamma]$ collects the off-diagonal contributions that depend on the coherence of the final global eigenvector in the local basis. Neither contribution is generally a classical probability because $\Pi_s\Pi_k$ need not be Hermitian or positive.

Motivated by the channel-dependent results in Secs. \ref{sec:dephase_depolarizing} and \ref{sec:decay}, we formulate the following sufficient condition for negative quasiprobability trajectories. 
\begin{proposition}
\label{prop:interference-dominance}
Consider a trajectory $\gamma$ with $p_k>0$ for which both $q_\text{D}[\gamma]$ and $q_\text{I}[\gamma]$ are real. Given the decomposition $q[\gamma] = q_\text{D}[\gamma] + q_\text{I}[\gamma]$ in Eq.~\eqref{eq:interference-decomposition}, if $q_\text{D}[\gamma]\geq0$ and $q_\text{I}[\gamma]<-q_\text{D}[\gamma]$, then $q[\gamma]<0$ and $\mathcal Q_S[\gamma]=p_kq[\gamma]<0$.
\end{proposition}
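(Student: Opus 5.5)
The argument is essentially arithmetic once the decomposition in Eq.~\eqref{eq:interference-decomposition} is in place. I will first make sure that decomposition is exact, and then read off the sign.

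\textbf{Step 1: the decomposition is exact.}
\begin{itemize}
\item[(i)] Start from the amplitude form in Eq.~\eqref{eq:q-operator-sum}, $q[\gamma]=\langle k'|s'\rangle\bra{s'}\mathcal E_S(\Pi_s\Pi_k)\ket{k'}$.
\item[(ii)] Insert the expansion in Eq.~\eqref{eq:final-global-expansion}, $\ket{k'}=\sum_{r'}z_{r'}\ket{r'}$ with $z_{r'}=\langle r'|k'\rangle$. This uses completeness of the final local product basis $\{\ket{r'}\}$, which is orthonormal because it consists of the spectral projectors $\Pi_{s'}$.
\item[(iii)] Use $\langle k'|s'\rangle=z_{s'}^*$ and linearity of the matrix element in the ket. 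The $r'=s'$ term gives $z_{s'}^*z_{s'}\bra{s'}\mathcal E_S(\Pi_s\Pi_k)\ket{s'}=q_\text{D}[\gamma]$.
\item[(iv)] The remaining terms give exactly $q_\text{I}[\gamma]$.
\end{itemize}
So $q=q_\text{D}+q_\text{I}$ holds identically as complex numbers, with no approximation.

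\textbf{Step 2: the sign.} By hypothesis both pieces are real, so the comparison $q_\text{I}<-q_\text{D}$ makes sense. Adding $q_\text{D}$ to both sides gives
\[
q[\gamma]=q_\text{D}[\gamma]+q_\text{I}[\gamma]<0 .
\]
The assumption $q_\text{D}\geq0$ is not needed for this inequality. Its role is interpretive: it forces $q_\text{I}<0$, so the negativity is attributed to the interference term overcoming a nonnegative diagonal contribution. I would remark on this explicitly.

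\textbf{Step 3: the reduced weight.} Eq.~\eqref{eq:QS-q-pk} gives $\mathcal Q_S[\gamma]=p_kq[\gamma]$. With $p_k>0$, multiplying the strict inequality by a positive scalar preserves it, so $\mathcal Q_S[\gamma]<0$.

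\textbf{Main obstacle.} There is no real obstacle. The only point that needs care is justifying the reality hypothesis, since $\Pi_s\Pi_k$ is not Hermitian and the individual terms could in principle be complex. That is precisely why the proposition assumes reality rather than deriving it. I would note that in the Bell-diagonal and $X$-state examples all amplitudes are real, so the hypothesis is met there.
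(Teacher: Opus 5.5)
Your proposal is correct and follows the paper's own route: the paper gives no separate proof beyond deriving $q[\gamma]=q_\text{D}[\gamma]+q_\text{I}[\gamma]$ by inserting the local-basis expansion of $\ket{k'}$ into the amplitude form, after which the sign follows directly and $p_k>0$ transfers it to $\mathcal Q_S[\gamma]$. Your remark is also accurate: $q_\text{D}[\gamma]\geq0$ is not needed for the strict inequality $q[\gamma]<0$, and only serves to attribute the negativity to the interference term.
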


We refer to the condition in Proposition~\ref{prop:interference-dominance} as interference dominance, which states that a negative off-diagonal contribution must outweigh the nonnegative diagonal contribution for a negative trajectory weight to occur. The resulting negative distribution is therefore not a consequence of noncommutativity alone. Instead, it results from dynamical destructive interference between the local-basis components of the final global eigenstate.

The dephasing and depolarizing channels provide two useful counterexamples to the inference that noncommutativity must produce negativity. For local dephasing, the two transition coefficients in Eq.~\eqref{eq:dephasing-q-values} have the decompositions
\begin{subequations}
\label{eq:DI-dephasing}
\begin{align}
\left(q^{(1)}_\text{D}[\gamma],q^{(1)}_\text{I}[\gamma]\right) & = \left(\frac14,\frac{\lambda^2}{4}\right), \\
\left(q^{(2)}_\text{D}[\gamma],q^{(2)}_\text{I}[\gamma]\right) & = \left(\frac14,-\frac{\lambda^2}{4}\right).
\end{align}
\end{subequations}
Here, the superscript distinguishes the distinct values of $q_\text{D}[\gamma]$ and $q_\text{I}[\gamma]$. Although $q_\text{I}[\gamma]$ can be negative, its magnitude never exceeds $q_\text{D}[\gamma]$ for $0\leq\lambda\leq1$. Hence, the destructive contribution can reduce a trajectory weight to zero only at the boundary points and cannot make it negative.

For local depolarization, the four transition coefficients in Eq.~\eqref{eq:kappa_parameter} have the decompositions
\begin{subequations}
\label{eq:DI-depolarizing}
\begin{align}
\left(q^{(1)}_\text{D}[\gamma],q^{(1)}_\text{I}[\gamma]\right)
&=\left(\frac{(1+\eta)^2}{16},\frac{\eta^2}{4}\right),\\
\left(q^{(2)}_\text{D}[\gamma],q^{(2)}_\text{I}[\gamma]\right)
&=\left(\frac{(1+\eta)^2}{16},-\frac{\eta^2}{4}\right),\\
\left(q^{(3)}_\text{D}[\gamma],q^{(3)}_\text{I}[\gamma]\right)
&=\left(\frac{(1-\eta)^2}{16},0\right),\\
\left(q^{(4)}_\text{D}[\gamma],q^{(4)}_\text{I}[\gamma]\right)
&=\left(\frac{1-\eta^2}{16},0\right).
\end{align}
\end{subequations}
The only negative interference contribution is $q^{(2)}_\text{I}[\gamma]$. The strict inequality $q^{(2)}_\text{I}[\gamma]<-q^{(2)}_\text{D}[\gamma]$ would require $(\eta-1)(3\eta+1)>0$, which is incompatible with the complete-positivity interval $-1/3\leq\eta\leq1$. Equality occurs only at the two boundary points, where $q^{(2)}_\text{D}[\gamma]+q^{(2)}_\text{I}[\gamma]=0$. Thus, depolarization can suppress this transition weight but cannot produce a negative one.

The amplitude-damping channel behaves differently because both its final global eigenvectors and its local marginal eigenvalues depend on the damping strength. For $0<\mu<1$ and $v>0$, the negative trajectories $[\Phi^+,00,\Omega^-,00]$ and $[\Phi^-,11,\Omega^+,11]$ given by Eq.~\eqref{eq:Q_S_v>0} have the following decompositions, respectively,
\begin{subequations}
\label{eq:DI-AD-positive}
\begin{align}
\left(q^{(1)}_\text{D}[\gamma],q^{(1)}_\text{I}[\gamma]\right)
&=\left(\frac12\bigl(\omega^-_{00}\bigr)^2,\frac12\bar{\mu}\omega^-_{00}\omega^-_{11}\right),\\
\left(q^{(2)}_\text{D}[\gamma],q^{(2)}_\text{I}[\gamma]\right)
&=\left(\frac12\bar{\mu}^2\bigl(\omega^+_{11}\bigr)^2,-\frac12\bar{\mu}\omega^+_{00}\omega^+_{11}\right).
\end{align}
\end{subequations}
For $v<0$, the exchange $\Phi^+\leftrightarrow\Phi^-$ gives the corresponding trajectories $[\Phi^-,00,\Omega^-,00]$ and $[\Phi^+,11,\Omega^+,11]$, with
\begin{subequations}
\label{eq:DI-AD-negative}
\begin{align}
\left(q^{(3)}_\text{D}[\gamma],q^{(3)}_\text{I}[\gamma]\right)
&=\left(\frac12\bigl(\omega^-_{00}\bigr)^2,-\frac12\bar{\mu}\omega^-_{00}\omega^-_{11}\right),\\
\left(q^{(4)}_\text{D}[\gamma],q^{(4)}_\text{I}[\gamma]\right)
&=\left(\frac12\bar{\mu}^2\bigl(\omega^+_{11}\bigr)^2,\frac12\bar{\mu}\omega^+_{00}\omega^+_{11}\right).
\end{align}
\end{subequations}
The coefficients $\omega^\pm_{00}$ and $\omega^\pm_{11}$ are defined in Eq.~\eqref{eq:eigenvectors}. Their signs in the respective $v$ regimes imply $q^{(j)}_\text{I}[\gamma]<-q^{(j)}_\text{D}[\gamma]$ for $j=1,\ldots,4$. These are precisely the four negative transition trajectories identified in the amplitude-damping analysis.

For a general quantum channel, $q_\text{D}[\gamma]$ need not even be real, let alone nonnegative, because $\Pi_s\Pi_k$ is generally neither Hermitian nor positive. Proposition~\ref{prop:interference-dominance} is therefore a sufficient criterion for the real-valued trajectories considered here rather than a universal characterization of complex KD quasiprobabilities. Negative quasiprobabilities may arise through more general mechanisms.

\subsection{Example of negative quasiprobability trajectories}\label{Examples on negative quasiprobability}

Besides the amplitude-damping example, we now give another example that realizes the sufficient condition in Proposition~\ref{prop:interference-dominance}. It is chosen so that the diagonal and off-diagonal contributions can be displayed explicitly while retaining a strictly negative reduced quasiprobability. Consider the single-qubit channel $\mathcal E$ with Kraus operators
\begin{equation}
\label{eq:example-kraus}
K_0=\frac{1}{\sqrt{2}}\ket0\!\bra0,\qquad K_1=\ket0\!\bra1+\frac{1}{\sqrt{2}}\ket1\!\bra0,
\end{equation}
which satisfy $K_0^\dagger K_0+K_1^\dagger K_1=\mathbb I$. The channel is a pre-rotated amplitude-damping channel, obtained by applying a bit flip before standard amplitude damping with damping parameter \(1/2\). Physically, the map describes a \(\pi\) pulse that interchanges the computational-basis populations immediately before energy relaxation~\cite{Fisher2012}.

We take the initial state to be $\rho_S=\ket{\Phi^+}\!\bra{\Phi^+}$, whose only nonzero Bell-state eigenvalue is $p_{k=\Phi^+}=1$, and select the initial labels $k=\Phi^+$ and $s=11$. Applying the product channel gives
\begin{equation}
(\mathcal E\otimes\mathcal E)(\Pi_{s=11}\Pi_{k=\Phi^+})
=\frac12\ket{00}\!\bra{00}
+\frac14\ket{00}\!\bra{11}.
\label{eq:example-evolved-operator}
\end{equation}
The first term is diagonal in the computational basis and the second one is off-diagonal, thereby clearly identifying the two contributions in Eq.~\eqref{eq:interference-decomposition}. The final state has two nondegenerate eigenvalues $\chi_\pm=(3\pm2\sqrt{2})/8$ with the corresponding orthonormal eigenvectors
\begin{align}
\ket{\chi_+}
&=\cos\frac{\pi}{8}\ket{00}
+\sin\frac{\pi}{8}\ket{11},\\
\ket{\chi_-}
&=-\sin\frac{\pi}{8}\ket{00}
+\cos\frac{\pi}{8}\ket{11}.
\end{align}
The two remaining eigenvalues are degenerate and equal to $1/8$, with eigenvectors $\ket{01}$ and $\ket{10}$. Tracing out either qubit gives $\rho'_{S_1}=\rho'_{S_2}=\text{diag}(3/4,1/4)$, so the final local eigenbasis remains the computational basis.

For the trajectory with final labels $k'=\chi_-$ and $s'=00$, we obtain the negative weight
\begin{equation}
q[\Phi^+,11,\chi_-,00] =\frac{4-3\sqrt2}{16}<0,
\end{equation}
which has the decomposition
\begin{equation}
q_\text{D} = \frac{2-\sqrt{2}}{8},\quad q_\text{I} = -\frac{\sqrt{2}}{16} < -q_\text{D}.
\end{equation}
The negative interference term therefore overwhelms the diagonal term exactly as required by Proposition~\ref{prop:interference-dominance}. Since $p_{k=\Phi^+}=1$, the complete trajectory weight equals its transition coefficient. Consequently, $\mathcal Q_S[\Phi^+,11,\chi_-,00]=q[\Phi^+,11,\chi_-,00]<0$. 

\section{\label{sec:violation_horvath} Violation of Horv\'ath's criterion}

In this section, we examine whether the real quasiprobability weights satisfy Horv\'ath's necessary and sufficient conditions for the Jensen-Steffensen inequality. In Sec.~\ref{H-criterion}, we clarify the role of the standard Jensen's inequality in the quasiprobability FT and explain Horv\'ath's criterion. Section~\ref{Calculation of the Criterion} presents an example in which negative trajectory weights are neutralized by grouping at degenerate support values, so that Horv\'ath's criterion holds. Section~\ref{scend Calculation of the Criterion} then provides a contrasting case in which the parameter family containing the most negative KD quasiprobability trajectory violates Horv\'ath's criterion. Finally, Sec.~\ref{third Calculation of the Criterion} moves beyond this extremal family and identifies a structured amplitude-damping region in which the violation of Horv\'ath's criterion persists.

\subsection{Jensen's inequality and Horv\'ath's criterion\label{H-criterion}}


For an ordinary probability distribution $\{p_l\}$, Jensen's inequality for the convex function $e^{-x}$ gives $\langle e^{-x}\rangle\geq e^{-\langle x\rangle}$ \cite{Jensen1906}. Here, Jensen's inequality denotes the standard nonnegative-weight result. The Jensen-Steffensen inequality is its classical extension to real weights under appropriate partial-sum constraints \cite{Steffensen1918}. Horv\'ath's criterion used below gives necessary and sufficient conditions for the corresponding discrete signed-weight problem. When the nonnegative weights $p_l$ are replaced by a normalized signed distribution $\{q_l\}$, the standard Jensen's inequality may, but need not, be violated, since positivity of $\{q_l\}$ is sufficient but not necessary for Jensen's inequality to hold. 

In the present context of the quasiprobability informational FT, the quasiprobability distribution, which may be signed or complex in general, satisfies the integral FT, but the standard Jensen's inequality cannot be applied directly. Therefore, the second-law-like inequality $\langle\Delta\iota\rangle_{\mathcal Q_S}\geq 0$ cannot be derived automatically. On the other hand, we have $\langle\Delta\iota\rangle_{\mathcal Q_S} = \Delta\mathcal I$ from Eq. \eqref{eq:classical-mutual information} and $\Delta\mathcal I\geq 0$ from the quantum data-processing inequality \cite{LiebRuskai1973,SchumacherNielsen1996,buscemi2014complete}. Therefore, the inequality
\begin{equation}
\label{eq:valid_jensen_inequality}
\langle e^{-\Delta\iota}\rangle_{\mathcal Q_S}\geq e^{-\langle\Delta\iota\rangle_{\mathcal Q_S}}
\end{equation}
must hold. In other words, the above exponential relation does not follow from the definitions of $\Delta\iota[\gamma]$ and $\mathcal Q_S$, but from the integral quasiprobability FT $\langle e^{-\Delta\iota}\rangle_{\mathcal Q_S}=1$ and the quantum data-processing inequality $\langle\Delta\iota\rangle_{\mathcal Q_S}\geq 0$. It is important to distinguish Eq.~\eqref{eq:valid_jensen_inequality} from a Jensen-Steffensen inequality valid for arbitrary convex functions. Equation~\eqref{eq:valid_jensen_inequality} concerns only the specific convex function $f(x)=e^{-x}$. By contrast, Horv\'ath's criterion characterizes when Jensen's inequality holds for every continuous convex function under a real signed distribution. Therefore, the validity of Eq.~\eqref{eq:valid_jensen_inequality} does not imply Horv\'ath's criterion.

Recently, necessary and sufficient conditions for the discrete Jensen-Steffensen inequality with real weights were proposed by Horv\'ath \cite{Horvath2024}. We call them Horv\'ath's criterion. Before introducing Horv\'ath's criterion, we first explain why a degenerate distribution must be treated separately. Suppose that the same value of a stochastic variable occurs with weights $q_1$ and $q_2$. When taking the average, this is equivalent to assigning the stochastic variable the weight $q_1+q_2$. Therefore, the negativity associated with a degenerate support value is invisible to the Jensen-Steffensen inequality. Formally, trajectories that share the same support value $\Delta\iota[\gamma]$ must first be grouped because only the weighted sums, rather than the individual trajectories, enter the inequality.

Degeneracy can occur in the stochastic mutual information change $\Delta\iota[\gamma]$ defined in Eq. \eqref{eq:stochastic-mutual information}. Let
\begin{equation}
\Delta\tilde{\iota}_1>\Delta\tilde{\iota}_2>\cdots>\Delta\tilde{\iota}_m
\label{eq:grouped-support-values}
\end{equation}
be the $m$ distinct values taken by $\Delta\iota[\gamma]$, ordered from largest to smallest, and define the corresponding grouped weights by
\begin{equation}
\tilde{\mathcal Q}_l=
\sum_{\tau:\,\Delta\iota[\tau]=\Delta\tilde{\iota}_l}
\mathcal Q_S[\tau].
\label{eq:grouped-weights}
\end{equation}
The grouped distribution $\{(\Delta\tilde{\iota}_l,\tilde{\mathcal Q}_l)\}_{l=1}^{m}$ therefore preserves the total weight. For this grouped distribution, Horv\'ath's criterion gives the necessary and sufficient conditions for the discrete Jensen-Steffensen inequality \cite{Horvath2024}.

\begin{theorem}
\label{thm:horvath-criterion}
The Jensen-Steffensen inequality holds for the distribution $\{(\Delta\tilde{\iota}_l,\tilde{\mathcal Q}_l)\}_{l=1}^{m}$ and every continuous convex function if and only if
\begin{subequations}
\label{eq:horvath-condition}   
\begin{gather}
\label{eq:horvath-left-condition}
\sum_{l=1}^{\ell}\tilde{\mathcal{Q}}_l
\bigl(\Delta\tilde{\iota}_l-\Delta\tilde{\iota}_{\ell+1}\bigr)
\ge 0,
\\
\label{eq:horvath-right-condition}
\sum_{l=m+1-\ell}^{m}\tilde{\mathcal{Q}}_l
\bigl(\Delta\tilde{\iota}_{m-\ell}-\Delta\tilde{\iota}_l\bigr)
\ge 0,
\end{gather}
\end{subequations}
for $\ell=1,\dots,m-1$.
\end{theorem}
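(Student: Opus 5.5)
We plan to prove Theorem~\ref{thm:horvath-criterion} by reducing the Jensen-Steffensen inequality for arbitrary continuous convex functions to a finite family of extremal convex functions. Write $x_l=\Delta\tilde\iota_l$, ordered decreasingly, and $\bar x=\sum_l\tilde{\mathcal Q}_l x_l$ with $\sum_l\tilde{\mathcal Q}_l=1$. The inequality to be characterized is $\sum_l\tilde{\mathcal Q}_l f(x_l)\ge f(\bar x)$ for every continuous convex $f$.

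The first step is to show that $\bar x$ must lie in $[x_m,x_1]$. Apply the inequality to $f(x)=x$ and $f(x)=-x$; this gives only equality. Then apply it to the convex hinge functions $(x-x_1)_+$ and $(x_m-x)_+$. Both vanish on all support points, so the left-hand side is zero, which forces $f(\bar x)\le0$ and hence $x_m\le\bar x\le x_1$. Conversely, this localization will be implied by the conditions~\eqref{eq:horvath-condition} with $\ell=m-1$.

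The second step is the reduction. Only the values of $f$ at $x_1,\dots,x_m$ and at $\bar x$ enter the inequality. We may therefore replace $f$ on $[x_m,x_1]$ by the piecewise-linear interpolant $g$ through the points $(x_l,f(x_l))$. Since $g\ge f$ at $\bar x$ and $g=f$ at every node, proving the inequality for $g$ suffices.

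Any convex piecewise-linear function with these nodes can be written as an affine function plus a nonnegative combination of the hinges $(x-x_{\ell+1})_+$ and $(x_{m-\ell}-x)_+$. Choosing the right-hinge representation on one side of $\bar x$ and the left-hinge representation on the other handles the kink containing $\bar x$. Affine parts contribute equality by normalization and the definition of $\bar x$.

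It then suffices to check the inequality for the hinge functions. For $f=(x-x_{\ell+1})_+$ the left-hand side equals $\sum_{l\le\ell}\tilde{\mathcal Q}_l(x_l-x_{\ell+1})$. Because this is at least $0$, and $f(\bar x)$ is either $0$ or equal to that same sum when $\bar x\ge x_{\ell+1}$, condition~\eqref{eq:horvath-left-condition} is exactly what is required. Checking this identity carefully uses $\sum_l\tilde{\mathcal Q}_l(x_l-x_{\ell+1})=\bar x-x_{\ell+1}$. The mirror-image computation for $(x_{m-\ell}-x)_+$ yields condition~\eqref{eq:horvath-right-condition}.

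Necessity follows by evaluating the inequality on these same hinges. When $\bar x\le x_{\ell+1}$, the right-hand side vanishes and the condition is forced directly. When $\bar x>x_{\ell+1}$, we instead use the complementary hinge and the identity above.

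The main obstacle is the bookkeeping around the interval containing $\bar x$. There the interpolant's decomposition must mix left and right hinges, and one must verify that the two families of conditions jointly, and not separately, cover every case. I would handle this by splitting $g=g_L+g_R$ at the node adjacent to $\bar x$, each part convex, with $g_L$ built from right hinges and $g_R$ from left hinges. Since this is Horv\'ath's published result~\cite{Horvath2024}, we may alternatively simply cite it, specializing it to the grouped distribution~\eqref{eq:grouped-weights}.
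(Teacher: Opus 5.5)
The paper gives no proof of this statement; it imports it directly from Horv\'ath~\cite{Horvath2024}. Your self-contained route is a sound and standard way to prove such signed-weight Jensen criteria: localize $\bar x$, replace $f$ by its chordal interpolant $g$ (using $g(\bar x)\ge f(\bar x)$ and $g=f$ on the nodes), then reduce to hinges. It also gives something the citation does not, namely an explicit view of which conditions are active for a given position of $\bar x$.

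However, the computation in your fourth paragraph is wrong as stated. Put $L_\ell=\sum_{l\le\ell}\tilde{\mathcal Q}_l(x_l-x_{\ell+1})$ and $R_j=\sum_{l\ge m+1-j}\tilde{\mathcal Q}_l(x_{m-j}-x_l)$, with $R_0=0$. For $f=(x-x_{\ell+1})_+$ and $\bar x>x_{\ell+1}$, one has $f(\bar x)=\sum_{l=1}^{m}\tilde{\mathcal Q}_l(x_l-x_{\ell+1})=L_\ell-R_{m-\ell-1}$. That is the \emph{full} sum, not the partial sum $L_\ell$. So a right hinge lying below $\bar x$ needs both $L_\ell\ge0$ and $R_{m-\ell-1}\ge0$, and condition~\eqref{eq:horvath-left-condition} alone is not ``exactly what is required.'' For example, take nodes $(2,1,0)$ and weights $(0.8,0.4,-0.2)$, so $\bar x=2$. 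Every left condition holds ($L_1=0.8$, $L_2=2$), yet $(x-1)_+$ gives $0.8<1$, because $R_1=-0.2$.

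The repair is the mixed decomposition you allude to, made precise. If $\bar x\in[x_{p+1},x_p]$, let the affine part be the linear piece of $g$ on that interval and write $g(x)=a+bx+\sum_{j=2}^{p}c_j(x-x_j)_++\sum_{j=p+1}^{m-1}d_j(x_j-x)_+$, where the slope jumps satisfy $c_j,d_j\ge0$. Every hinge in this sum vanishes at $\bar x$, so its inequality reduces to $L_{j-1}\ge0$ or $R_{m-j}\ge0$. There is no troublesome kink at $\bar x$, since the piece containing it is absorbed into the affine term. Using right hinges only also works, provided you invoke both families for the hinges below $\bar x$.

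Necessity is simpler than your case split suggests. For $f=(x-x_{\ell+1})_+$ one has $L_\ell=\sum_l\tilde{\mathcal Q}_l f(x_l)\ge f(\bar x)\ge0$ wherever $\bar x$ lies, and symmetrically $R_\ell\ge0$ follows from $(x_{m-\ell}-x)_+$. No complementary hinge is needed.

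With these corrections the argument is complete. The $\ell=m-1$ conditions read $L_{m-1}=\bar x-x_m\ge0$ and $R_{m-1}=x_1-\bar x\ge0$, which supply the localization needed for sufficiency, as you noted.
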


In the present problem, the grouped weights are normalized as $\sum_{l=1}^{m}\tilde{\mathcal Q}_l=1$. Note that the conditions must hold for the Jensen-Steffensen inequality to be valid for every convex function. Conditions \eqref{eq:horvath-left-condition} and \eqref{eq:horvath-right-condition} test the signed distribution from the two ends of the ordered support and replace the nonnegative partial-sum conditions used in the positive-weight setting. Grouping equal support values is essential here because only the total weight at each distinct value of $\Delta\iota[\gamma]$ matters.

Among the conditions in Eq.~\eqref{eq:horvath-condition}, the right-end condition with $\ell=1$ is particularly useful and is given by
\begin{equation}
   \tilde{\mathcal{Q}}_m
   \bigl(\Delta\tilde{\iota}_{m-1}-\Delta\tilde{\iota}_m\bigr)
   \ge 0 .
\end{equation}
If the smallest support point $\Delta\tilde{\iota}_m$ is unique, namely $\Delta\tilde{\iota}_{m-1}>\Delta\tilde{\iota}_m$, this condition is equivalent to $\tilde{\mathcal Q}_m\geq0$. We therefore have the following test. 
\begin{corollary}
\label{Q_m}
Given the smallest grouped support point $\Delta\tilde{\iota}_m$ with grouped weight $\tilde{\mathcal{Q}}_m$, if $\tilde{\mathcal{Q}}_m<0$, then Horv\'ath's conditions fail.
\end{corollary}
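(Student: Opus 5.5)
This is a direct consequence of Theorem~\ref{thm:horvath-criterion}, specialized to the right-end condition \eqref{eq:horvath-right-condition} with $\ell=1$. The plan is to show that a negative grouped weight at the smallest support point violates that single inequality. Since Horv\'ath's criterion requires every inequality in Eq.~\eqref{eq:horvath-condition} to hold, the failure of one is enough.

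First, take $\ell=1$ in Eq.~\eqref{eq:horvath-right-condition}. The summation index then runs only over $l=m$, so the condition reduces to
\begin{equation*}
\tilde{\mathcal Q}_m\bigl(\Delta\tilde{\iota}_{m-1}-\Delta\tilde{\iota}_m\bigr)\ge 0 .
\end{equation*}
This requires $m\ge 2$ so that $\ell=1\le m-1$ is admissible. If $m=1$, normalization forces $\tilde{\mathcal Q}_1=1>0$, so the hypothesis $\tilde{\mathcal Q}_m<0$ cannot occur and this case is vacuous.

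Second, note that the grouping in Eq.~\eqref{eq:grouped-support-values} produces \emph{distinct} values in strictly decreasing order. Hence $\Delta\tilde{\iota}_{m-1}-\Delta\tilde{\iota}_m>0$ automatically. This is exactly why trajectories with equal $\Delta\iota[\gamma]$ were merged beforehand, and it is the only point needing care. If $\tilde{\mathcal Q}_m<0$, the product above is strictly negative, so the $\ell=1$ right-end condition fails. By Theorem~\ref{thm:horvath-criterion}, the Jensen-Steffensen inequality therefore does not hold for every continuous convex function.

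I expect no real obstacle. The only subtlety is making sure the corollary is applied to the grouped weight $\tilde{\mathcal Q}_m$ rather than to an individual trajectory weight $\mathcal Q_S[\gamma]$, because a negative trajectory could be cancelled by degenerate partners sharing the same minimal value of $\Delta\iota$. The hypothesis is stated in terms of the grouped weight, so this is already built in.
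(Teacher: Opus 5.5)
Your proposal is correct and matches the paper's argument: it specializes the right-end condition \eqref{eq:horvath-right-condition} to $\ell=1$ and uses the strict ordering of the grouped support values, so that $\Delta\tilde{\iota}_{m-1}-\Delta\tilde{\iota}_m>0$ and $\tilde{\mathcal Q}_m<0$ violates that inequality. Your use of normalization to rule out the case $m=1$, where the conditions are vacuous, is a small point the paper leaves implicit.
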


Note that the minimal support point is $\Delta\tilde{\iota}_m$, with grouped weight $\tilde{\mathcal{Q}}_m$. The exponential relation in Eq.~\eqref{eq:valid_jensen_inequality} nevertheless holds for the stochastic mutual information change $\Delta\iota[\gamma]$ with the corresponding weights $\mathcal Q_S[\gamma]$. A violation of Horv\'ath's criterion does not contradict this relation for the specific convex function $e^{-x}$.  


\subsection{Symmetry-assisted grouping for an initial Bell state\label{Calculation of the Criterion}}

We first give a simple example in which negative KD quasiprobabilities do not violate Horv\'ath's criterion. Consider the initial Bell state $\rho_S=\ket{\Phi^+}\!\bra{\Phi^+}$ with the correlation coefficients $c_1=-c_2=c_3=1$, under local amplitude damping with $0<\mu<1$. The only nonzero initial Bell-state eigenvalue is $p_{k=\Phi^+}=1$. The remaining zero-eigenvalue initial states have zero KD weight and are omitted from the nonzero trajectory support. The final local and global eigenvalues are given by Eqs. \eqref{eq:decay_local_eigenvalues} and \eqref{eq:decay_global_eigenvalues}, respectively. The distribution of $\Delta\iota[\gamma]$ is degenerate because
\begin{multline}
\Delta\iota[\Phi^+,00,\Omega^\pm,00]
=\Delta\iota[\Phi^+,11,\Omega^\pm,00]\\
=\ln\left(\frac{2(1+\mu)^2}{1-\mu+\mu^2\pm\sqrt{\mu^2+\bar\mu^2}}\right),
\end{multline}
where $\bar\mu=1-\mu$ and $\mu$ is the damping probability in Eq. \eqref{eq:kraus_amplitude_damping}.

For $v=2>0$, Eq.~\eqref{eq:Q_S_v>0} shows that the only negative trajectory with nonzero initial weight is $[\Phi^+,00,\Omega^-,00]$, which is degenerate with the trajectory $[\Phi^+,11,\Omega^-,00]$. Their quasiprobability weights are
\begin{subequations}
\label{eq:bell-reference-negative-pair}
\begin{align}
\mathcal Q_S[\Phi^+,00,\Omega^-,00]
&=\frac{\sqrt{\mu^2+\bar\mu^2}-\mu-\bar\mu^2}
{4\sqrt{\mu^2+\bar\mu^2}}<0,\\
\mathcal Q_S[\Phi^+,11,\Omega^-,00]
&=\frac{\mu^2\bigl(\sqrt{\mu^2+\bar\mu^2}-\mu\bigr)}
{4\sqrt{\mu^2+\bar\mu^2}}>0.
\end{align}
\end{subequations}
Thus, the negative contribution cannot be assessed independently of its degenerate positive partner. Summing the KD weights of the degenerate trajectories $[\Phi^+,00,\Omega^-,00]$ and $[\Phi^+,11,\Omega^-,00]$ gives
\begin{multline}
\mathcal Q_S[\Phi^+,00,\Omega^-,00]
+\mathcal Q_S[\Phi^+,11,\Omega^-,00]\\
=\frac{\bar\mu^2\bigl(1-\mu+\mu^2-\sqrt{\mu^2+\bar\mu^2}\bigr)}
{4\sqrt{\mu^2+\bar\mu^2}\bigl(\sqrt{\mu^2+\bar\mu^2}+\mu\bigr)}.
\end{multline}
Since $(1-\mu+\mu^2)^2-(\mu^2+\bar\mu^2)=\mu^2\bar\mu^2>0$, the summed weight above is positive. Moreover, all other trajectory weights are positive, so the grouped weights form an ordinary probability distribution. Consequently, the Jensen-Steffensen inequality holds for every continuous convex function, and hence Horv\'ath's criterion is satisfied. This example demonstrates that microscopic negativity alone is insufficient to establish a violation. 

\subsection{Horv\'ath's criterion for the most negative quasiprobability trajectory\label{scend Calculation of the Criterion}}

The previous example with the simple symmetric parameter choice shows that negativity can be absorbed by grouping at a degenerate support value. To test whether Horv\'ath's criterion fails for the KD quasiprobability weights, the most natural candidate is the most negative trajectory obtained in Theorem~\ref{thm:global-most-negative}, since a more negative weight is expected to bring the Jensen-Steffensen inequality closer to violation.

Recall that the most negative quasiprobability is obtained for $c_1=-c_2=v_{\mathcal Q_{\min}}/2$ and $c_3=1$, where $v_{\mathcal Q_{\min}}$ is given by Eq.~\eqref{eq:rstar-definition}. For this family, the initial populations $p_{k=\Psi^\pm}$ vanish, so trajectories originating from these zero-population states are omitted from the nonzero trajectory support. The analytic support values are listed in Eq.~\eqref{eq:global-minimizer-support-values} of Appendix~\ref{app:global_minimizer_support}. We find that the distribution for the most negative quasiprobability family with $0<\mu<\mu_c$ has a unique smallest support value, namely $\Delta\iota[\Phi^-,11,\Omega^+,11]$. The quasiprobability weight associated with this unique smallest support value is given in Eq.~\eqref{eq:Q_S_v>0_2}. As shown in Sec.~\ref{amplitude damping channel}, $\mathcal Q_S[\Phi^-,11,\Omega^+,11]$ is negative as long as $v>0$. Corollary~\ref{Q_m} therefore implies that Horv\'ath's criterion fails. This conclusion is also consistent with Fig.~\ref{fig:result2}, where the orange point associated with this negative weight occurs at the smallest value of $\Delta\iota[\gamma]$.

This violation does not mean that the trajectory responsible for the violation of Horv\'ath's criterion is itself the most negative reduced quasiprobability trajectory. The most negative trajectory is $\mathcal Q_S[\Phi^+,00,\Omega^-,00]$, as stated in Theorem~\ref{thm:global-most-negative}. Horv\'ath's endpoint test instead depends on the ordering of $\Delta\iota[\gamma]$ and on the grouped weight at its smallest support value. For $\mu_c\leq\mu<1$, the optimization gives $v_{\mathcal Q_{\min}}=2$, for which $p_{k=\Phi^-}=0$. This boundary family coincides with the initial Bell-state example considered above, for which the negative contribution is grouped with its degenerate positive partner and Horv\'ath's criterion is satisfied. Thus, the endpoint-violation argument applies only in the range $0<\mu<\mu_c$.

\subsection{Horv\'ath's criterion for a general negative quasiprobability trajectory}

\label{third Calculation of the Criterion}

In the previous subsection, we showed that Horv\'ath's criterion fails for the parameter family yielding the most negative quasiprobability trajectory. We now show that a violation of Horv\'ath's criterion can also appear in other regimes through the endpoint test in Corollary~\ref{Q_m}. 

Consider a two-parameter family of initial states given by 
\begin{equation}
\label{eq:structured-slice}
c_1=c_3=-r,\quad c_2=-(r-\delta),
\end{equation}
where $r>0$, $\delta>0$, and $r+\delta<1$ ensure that the state lies within the Bell-state tetrahedron. The damping probability $\mu$ in Eq.~\eqref{eq:kraus_amplitude_damping} lies in the range $0<\mu<1$. To apply the endpoint test of Horv\'ath's criterion, we first identify the unique smallest support value of $\Delta\iota[\gamma]$.

\begin{lemma}
\label{lem:structured-minimum-condition}
Define
\begin{equation}
\label{eq:Gmu-definition}
G_\mu(r,\delta)=(1+\mu)B+\bar\mu^2\delta+2\mu(1+\mu)-4\bar\mu^2r,
\end{equation}
where the parameter $B$ is defined in Eq.~\eqref{eq:A_and_B}. Then $\Delta\iota[\Phi^+,11,\Omega^+,11]$ is the unique smallest support value if and only if $G_\mu(r,\delta)>0$.
\end{lemma}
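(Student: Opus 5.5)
The plan is to write the stochastic mutual information change explicitly and reduce the lemma to a finite list of pairwise comparisons between support values. Since $p_{s_j}=1/2$ for both qubits initially, Eq.~\eqref{eq:stochastic-mutual information} gives
\begin{equation*}
\Delta\iota[k,s,k',s']=2\ln2+\ln p_k+\ln p_{s'_1}+\ln p_{s'_2}-\ln p_{k'} .
\end{equation*}
This depends on $s$ only through the support condition $\mathcal Q_S[\gamma]\neq0$. For the family in Eq.~\eqref{eq:structured-slice} we have $u=-2r+\delta$, $v=-\delta<0$ and $c_3=-r$. Substituting into Eq.~\eqref{eq:pk} gives
\begin{equation*}
p_{\Phi^+}=\tfrac14(1-r-\delta),\quad p_{\Phi^-}=p_{\Psi^+}=\tfrac14(1-r+\delta),\quad p_{\Psi^-}=\tfrac14(1+3r-\delta).
\end{equation*}
Hence $\Phi^+$ is the strictly smallest initial population, and $s'=11$ carries the strictly smallest local factor $\bar\mu^2/4$. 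The candidate $[\Phi^+,11,\Omega^+,11]$ therefore minimizes the first two contributions simultaneously, and only the term $-\ln p_{k'}$ can compete.

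First, I would use the nonzero trajectory list of Appendix~\ref{app:decay_quasiprobability} to fix which $(k,s',k')$ lie in the support. The final label $s'$ must overlap with $\ket{k'}$, so $\Omega^\pm$ pair only with $s'\in\{00,11\}$ and $\Psi^\pm$ only with $s'\in\{01,10\}$. I would also check that $\mathcal Q_S[\Phi^+,11,\Omega^+,11]\neq0$ on the whole region. This follows from Eq.~\eqref{eq:Q_S_v<0}, since both $\omega^+_{00}$ and $\omega^+_{11}$ are nonzero for $v\neq0$.

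Second, I would dispose of the easy competitors. Any trajectory with $k\neq\Phi^+$ and the same $(s',k')$ is strictly larger. Any trajectory with $s'\in\{00\}$ or with mixed $s'\in\{01,10\}$ paired with $\Omega^\pm$ is either outside the support or strictly larger. The trajectory $[\Phi^+,11,\Omega^-,11]$ is strictly larger because $p_{\Omega^+}-p_{\Omega^-}=B/2>0$. What remains are the cross-sector competitors $[\Phi^+,s',\Psi^\pm,s']$ with $s'\in\{01,10\}$, whose local factor is $\mu\bar\mu/4$, provided these trajectories are in the support. For these, strict minimality of the candidate is equivalent to
\begin{equation*}
\frac{\bar\mu^2}{p_{\Omega^+}}<\frac{(1+\mu)\bar\mu}{p_{\Psi^\pm}},\qquad\text{i.e.}\qquad (1+\mu)(A+B)>4\bar\mu\,p_{\Psi^\pm}\big|_{\text{scaled}} .
\end{equation*}

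Third, I would insert Eq.~\eqref{eq:decay_global_eigenvalues} with $c_3=-r$ and $u=-2r+\delta$. Here $A=(1-r)(1+\mu^2)+2r\mu$, and the $\Psi^\pm$ eigenvalues become $\tfrac{\bar\mu}{4}\bigl(2\pm(\delta-2r)-\bar\mu(1-r)\bigr)$. I would first identify which sign is larger on the region $r,\delta>0$, $r+\delta<1$. I would then show that the weaker of the two inequalities holds automatically. After cancelling the common factor $\bar\mu$ and collecting terms, the binding inequality should reduce exactly to $G_\mu(r,\delta)>0$ in Eq.~\eqref{eq:Gmu-definition}. The "only if" direction follows because, when $G_\mu\le0$, that competitor is less than or equal to the candidate, so the minimum is not unique or not attained at the candidate.

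I expect the main obstacle to be the bookkeeping in the second and third steps. It must be confirmed from the appendix that the relevant $\Psi$-trajectories with $k=\Phi^+$ actually carry nonzero weight. It must also be confirmed that exactly one comparison is binding. In particular, the sign of $\pm u$ picks out which $\Psi$ branch matters, and I would verify the reduction algebraically rather than relying on the guessed form.
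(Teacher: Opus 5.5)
Your overall route is the paper's: pass to $R=e^{\Delta\iota}=4p_kp_{s_1'}p_{s_2'}/p_{k'}$, use that $p_{\Phi^+}$ is the strictly smallest initial population and $s'=11$ the strictly smallest local factor, discard $\Omega^-$ via $p_{\Omega^+}-p_{\Omega^-}=B/2>0$, and reduce to the two $\Psi$-sector competitors. The gap is in your third step. You plan to ``identify which sign is larger'' among $p_{k'=\Psi^\pm}$ on the region, treat that comparison as binding, and get the other one from the ordering. But $p_{k'=\Psi^+}-p_{k'=\Psi^-}=\bar\mu u/2=\bar\mu(\delta-2r)/2$ changes sign inside the region $r,\delta>0$, $r+\delta<1$; compare $(r,\delta)=(0.1,0.5)$ with $(0.4,0.1)$. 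For $\delta>2r$ the binding comparison is the $\Psi^+$ one, and it does not reduce to $G_\mu>0$. So your expectation that ``the binding inequality should reduce exactly to $G_\mu(r,\delta)>0$'' fails there, and the ``if'' direction cannot come from a single ordering argument.

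What is missing is that the $\Psi^+$ comparison holds unconditionally, so no ordering is needed. After substituting the eigenvalues, the $r$-dependence cancels:
\begin{equation*}
4\bigl[(1+\mu)p_{k'=\Omega^+}-\bar\mu\,p_{k'=\Psi^+}\bigr]=(1+\mu)(2\mu+B)-\bar\mu^2\delta>0,
\end{equation*}
because $B=\sqrt{\bar\mu^2\delta^2+4\mu^2}>\bar\mu\delta\ge\bar\mu^2\delta$. Meanwhile $4\bigl[(1+\mu)p_{k'=\Omega^+}-\bar\mu\,p_{k'=\Psi^-}\bigr]=G_\mu(r,\delta)$ exactly. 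Strict minimality of the candidate is the conjunction of these two strict comparisons, hence equivalent to $G_\mu>0$. This is the paper's argument, and as a byproduct $G_\mu>0$ holds automatically whenever $\delta\ge2r$.

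Two smaller repairs are needed. First, the $\Psi$-sector competitors with $k=\Phi^+$ are $[\Phi^+,11,\Psi^\pm,01]$ and $[\Phi^+,11,\Psi^\pm,10]$, each with weight $\bar\mu\mu\,p_{\Phi^+}/4>0$. They are not $[\Phi^+,s',\Psi^\pm,s']$, which has zero weight since $\langle01|\Phi^+\rangle=0$. Second, nonvanishing of $\omega^+_{00}$ and $\omega^+_{11}$ does not by itself make $\mathcal Q_S[\Phi^+,11,\Omega^+,11]=\tfrac12\bar\mu\omega^+_{11}(\omega^+_{00}+\bar\mu\omega^+_{11})p_{\Phi^+}$ nonzero. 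You need $2\mu+B-\bar\mu^2\delta>0$, which is again the inequality $B>\bar\mu^2\delta$.
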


The function $G_\mu(r,\delta)$ arises from comparing $\Delta\iota[\Phi^+,11,\Omega^+,11]$ with the only nontrivial competing support value $\Delta\iota[\Phi^+,11,\Psi^-,01]=\Delta\iota[\Phi^+,11,\Psi^-,10]$, and $G_\mu(r,\delta)>0$ is equivalent to the former being the smaller of the two. A detailed proof is provided in Appendix~\ref{app:violation_Horvath}. Equipped with the smallest support value identified in the lemma above, we establish the violation of Horv\'ath's criterion stated in the following theorem. 

\begin{theorem}
\label{thm:structured-slice-violation}
For the structured parameter family defined in Eq.~\eqref{eq:structured-slice}, if $G_\mu(r,\delta)>0$, then the unique smallest support value $\Delta\iota[\Phi^+,11,\Omega^+,11]$ has the weight
\begin{equation}
\label{eq:structured-Qstar}
\mathcal Q_S[\Phi^+,11,\Omega^+,11] = \frac{1-r-\delta}{8}\,
\frac{\bar\mu^2\delta\bigl(\bar\mu^2\delta-2\mu-B\bigr)}
     {\bar\mu^2\delta^2+(2\mu+B)^2},
\end{equation}
which is negative and therefore violates Horv\'ath's criterion. 
\end{theorem}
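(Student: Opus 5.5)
Since $v=c_1-c_2=-\delta<0$ on the family \eqref{eq:structured-slice}, the relevant trajectory is the second of the two negative $v<0$ trajectories in Eq.~\eqref{eq:Q_S_v<0}:
\begin{equation*}
\mathcal Q_S[\Phi^+,11,\Omega^+,11]=\tfrac12\bigl(\bar\mu\,\omega^+_{00}\omega^+_{11}+\bar\mu^2(\omega^+_{11})^2\bigr)\,p_{k=\Phi^+}.
\end{equation*}
The plan has three steps. First I substitute the family into this expression to get Eq.~\eqref{eq:structured-Qstar}. Then I check the sign. Finally I combine Lemma~\ref{lem:structured-minimum-condition} with Corollary~\ref{Q_m}, after confirming that grouping at the minimal support value adds nothing.

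\emph{Explicit weight.} From Eq.~\eqref{eq:pk} with $c_3=-r$ and $v=-\delta$, I get $p_{k=\Phi^+}=(1-r-\delta)/4$. With $v=-\delta$ the eigenvector coefficients in Eq.~\eqref{eq:eigenvectors} become
\begin{equation*}
\omega^+_{00}=\frac{2\mu+B}{N},\qquad \omega^+_{11}=-\frac{\bar\mu\delta}{N},\qquad N^2=\bar\mu^2\delta^2+(2\mu+B)^2 ,
\end{equation*}
where $B=\sqrt{\bar\mu^2\delta^2+4\mu^2}$. The two terms in the bracket are then $-\bar\mu^2\delta(2\mu+B)/N^2$ and $\bar\mu^4\delta^2/N^2$. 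Adding them and multiplying by $\tfrac12 p_{k=\Phi^+}$ gives exactly Eq.~\eqref{eq:structured-Qstar}.

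\emph{Sign.} The prefactors satisfy $1-r-\delta>0$ and $\bar\mu^2\delta>0$, and the denominator $N^2$ is positive. So negativity reduces to showing $\bar\mu^2\delta<2\mu+B$. This holds because $B>\bar\mu\delta\geq\bar\mu^2\delta$ for $0<\mu<1$, and in addition $2\mu>0$. This is exactly the interference-dominance mechanism of Proposition~\ref{prop:interference-dominance} for $q^{(4)}$.

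\emph{Endpoint test.} By Lemma~\ref{lem:structured-minimum-condition}, $G_\mu(r,\delta)>0$ makes $\Delta\iota[\Phi^+,11,\Omega^+,11]$ the unique smallest support value, so $\Delta\tilde\iota_{m-1}>\Delta\tilde\iota_m$. The step I expect to need the most care is showing that the grouped weight $\tilde{\mathcal Q}_m$ equals this single trajectory weight. The local initial eigenvalues are all $1/2$, so $\Delta\iota$ depends on $s$ only trivially. Any trajectory $[k,s,\Omega^+,11]$ with $p_k=p_{\Phi^+}$ therefore shares the same support value, and each must be checked.

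\emph{Other initial Bell states.} The initial populations on the family are $p_{\Phi^-}=p_{\Psi^+}=(1-r+\delta)/4$ and $p_{\Psi^-}=(1+3r-\delta)/4$. These differ from $p_{\Phi^+}=(1-r-\delta)/4$ because $\delta>0$ and $r>0$. So trajectories from $k\neq\Phi^+$ have different support values.

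\emph{Other initial local labels.} For $k=\Phi^+$, the candidates are $s=00$ and the odd-parity labels $s\in\{01,10\}$.
\begin{itemize}
\item For $s=00$, the operator $\mathcal E_S(\Pi_{00}\Pi_{\Phi^+})$ has range spanned by $\ket{00}$. Hence $\bra{11}\mathcal E_S(\cdot)\ket{\Omega^+}=0$ and the weight vanishes.
\item For $s\in\{01,10\}$, $\Pi_s\Pi_{\Phi^+}=0$, so the weight is zero.
\end{itemize}
I would confirm both of these against the complete list in Appendix~\ref{app:decay_quasiprobability}.

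\emph{Conclusion.} With these checks, $\tilde{\mathcal Q}_m=\mathcal Q_S[\Phi^+,11,\Omega^+,11]<0$. Corollary~\ref{Q_m} then shows that Horv\'ath's criterion (Theorem~\ref{thm:horvath-criterion}) fails.
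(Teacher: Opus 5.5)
Your proposal is correct and follows essentially the same route as the paper. You substitute $v=-\delta$ into the components of $\ket{\Omega^+}$ to obtain Eq.~\eqref{eq:structured-Qstar}, you show $\bar\mu^2\delta<2\mu+B$ (the paper uses $B^2-\bar\mu^4\delta^2>0$, which is equivalent to your $B>\bar\mu\delta\geq\bar\mu^2\delta$), and you then combine Lemma~\ref{lem:structured-minimum-condition} with Corollary~\ref{Q_m}. Your explicit check that the trajectories $[\Phi^+,s,\Omega^+,11]$ with $s\neq 11$ carry zero weight, so that $\tilde{\mathcal Q}_m$ equals the single trajectory weight, fills in a step the paper leaves implicit when it writes $\tilde{\mathcal Q}_m<0$.
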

The proof is provided in Appendix~\ref{app:violation_Horvath}.
Specifically, it evaluates the transition quasiprobability from the components of $\ket{\Omega^+}$, establishes the negativity of Eq.~\eqref{eq:structured-Qstar}, and combines this result with the uniqueness condition $G_\mu(r,\delta)>0$ to apply the endpoint test in Corollary~\ref{Q_m}.

The condition $G_\mu(r,\delta)>0$ is equivalent to $0<r<r_c(\delta,\mu)$, where
\begin{multline}
r_c(\delta,\mu)=\frac{1}{4\bar\mu^2}\Big((1+\mu)\sqrt{\bar\mu^2\delta^2+4\mu^2}\\
+\bar\mu^2\delta+2\mu(1+\mu)\Big).
\end{multline}
The endpoint-violation region can therefore be written as $0<r<\min\left(1-\delta,r_c(\delta,\mu)\right)$. 
Evaluating $G_\mu(r,\delta)$ on the physical boundary $r=1-\delta$ gives
\begin{multline}
G_\mu(1-\delta,\delta) =(1+\mu)\sqrt{\bar\mu^2\delta^2+4\mu^2}
 +5\bar\mu^2\delta\\
 +2\mu(1+\mu)-4\bar\mu^2.
\end{multline}
This boundary expression is strictly increasing in $\delta$. Since $G_\mu(1,0)=4(3\mu-1)$, the endpoint test of Horv\'ath's criterion is violated throughout the entire physical region whenever $\mu\geq1/3$. 

For $0<\mu<1/3$, there is a unique threshold
\begin{equation}
\label{eq:delta0-definition}
\delta_0(\mu)=
\frac{10-25\mu+5\mu^2
-(1+\mu)\sqrt{25\mu^2-12\mu+4}}
{2(6\mu^2-13\mu+6)}
\end{equation}
such that the endpoint-violation condition is $0<r<r_c(\delta,\mu)$ for $0<\delta<\delta_0(\mu)$. For example, at $\mu=0.2$, Eq.~\eqref{eq:delta0-definition} gives $\delta_0(0.2)\simeq0.4485$. Thus, for $0<\delta<\delta_0(0.2)$, the endpoint-violation region is $0<r<r_c(\delta,0.2)$. For $\delta_0(0.2)\leq\delta<1$, the criterion is violated throughout the entire physical region $0<r<1-\delta$. The boundary between the two regimes is the curve $G_\mu(r,\delta)=0$, as illustrated in Fig.~\ref{fig:violation}.

The violation sharpens a statement made in Ref.~\cite{Zhang2026Multipartite}, where the Jensen-like relation given in Eq. \eqref{eq:valid_jensen_inequality} associated with the information FT was connected to the criterion for signed weights. The present analysis shows that the FT and the data-processing inequality guarantee the relation for the particular exponential function but do not, in general, imply the stronger condition required for all continuous convex functions.

\begin{figure}
    \centering
    \includegraphics[width=1\linewidth]{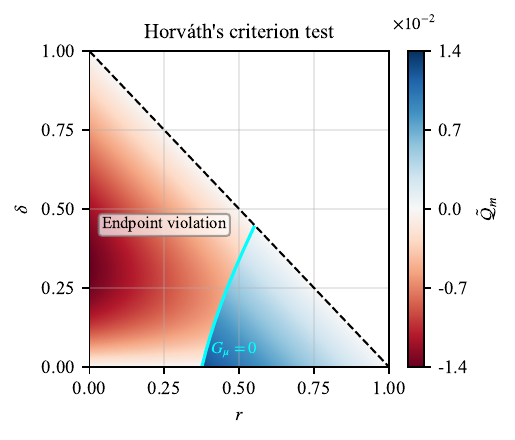}
    \caption{Grouped quasiprobability weight $\tilde{\mathcal Q}_m$ at the smallest support value of $\Delta\iota[\gamma]$ in the $(r,\delta)$ plane for the structured parameters in Eq.~\eqref{eq:structured-slice} with $\mu=0.2$. Red and blue denote negative and positive values, respectively. The cyan curve is the boundary $G_\mu(r,\delta)=0$. The analytic endpoint-violation region is defined by $G_\mu(r,\delta)>0$. Only the physical region $0<r<1-\delta$ is considered.}
    \label{fig:violation}
\end{figure}

\section{\label{sec:conclusion} Conclusions and outlook}

Using two-qubit Bell-diagonal states subject to three representative decoherence channels, we have systematically investigated negativity in the KD quasiprobability trajectories that describe the statistics of quantum mutual information dissipation~\cite{Zhang2026Multipartite}. All physically allowed initial Bell-diagonal states yield nonnegative trajectory weights under local dephasing and depolarizing dynamics, whereas local amplitude damping produces negative trajectories whenever $v=c_1-c_2\neq0$. Thus, the trajectory negativity is not determined by entanglement or CHSH nonlocality.

We have analytically determined the most negative transition quasiprobability $q[\gamma]$ and reduced trajectory weight $\mathcal Q_S[\gamma]=p_kq[\gamma]$ in the amplitude-damping dynamics, obtaining the parameter-independent bounds $q[\gamma]>(1-\sqrt{2})/4$ and $\mathcal Q_S[\gamma]>(1-\sqrt{2})/8$. Comparison among the three channels further reveals that negativity arises when a negative off-diagonal interference contribution dominates the nonnegative diagonal contribution. Based on this observation, we formulate a sufficient condition for negativity of a KD quasiprobability trajectory, as stated in Proposition~\ref{prop:interference-dominance}.

A further distinction emerges in the relation between quasiprobability negativity and convexity. Although the standard Jensen's inequality cannot be assumed for distributions with negative weights, the Jensen-like relation $\langle e^{-\Delta\iota}\rangle_{\mathcal Q_S}\geq e^{-\langle\Delta\iota\rangle_{\mathcal Q_S}}$ nevertheless follows from the integral FT and the quantum data-processing inequality. We prove that the real signed trajectory distributions in the amplitude-damping examples violate Horv\'ath's necessary and sufficient conditions for the Jensen-Steffensen inequality to hold for every continuous convex function~\cite{Horvath2024}. Thus, once Horv\'ath's criterion is violated, Jensen-type inequalities for other information measures are no longer guaranteed by the quasiprobability trajectory and must be examined separately. Generalized entropies, including Tsallis and linear entropies, provide natural candidates for such future investigations~\cite{Tsallis1988,ManfrediFeix2000,Rungta2001}.

A natural extension of our work is to test the interference decomposition and the optimized lower bounds on KD quasiprobability trajectory weights for general bipartite and multipartite states, as well as for correlated, non-Markovian, or collective open-system dynamics~\cite{Dicke1954,Lehmberg1970,Pechukas1994,BreuerLainePiilo2009}. Another open problem is to extend this trajectory perspective to quantum R\'enyi divergences~\cite{Petz1986,FrankLieb2013,Leditzky2017}. It would be valuable to determine whether the monotonicity of R\'enyi divergences has a trajectory-level counterpart, namely, whether the corresponding R\'enyi data-processing inequality holds precisely when the KD quasiprobability distribution satisfies a generalized Jensen-Steffensen condition, and whether failure of this inequality can be traced to the loss of this convexity compatibility. Finally, an important goal is to determine whether an appropriate generalized quantum Jensen's inequality can be used to derive the quantum data-processing inequality from the KD quasiprobability FT even when Horv\'ath's criterion is violated. Such a result would bridge the gap between classical and quantum stochastic information dynamics.

\begin{acknowledgments}
This work was supported by the NSFC (Grants No. 12305028, No. 12275215, and No. 12247103), Scientific Research Program Funded by Education Department of Shaanxi Provincial Government (Program No. 24JP186 and No. 25JP183), and the Youth Innovation Team of Shaanxi Universities. KZ was supported by the China Postdoctoral Science Foundation under Grant Number 2025M773421, and Shaanxi Province Postdoctoral Science Foundation under Grant Number 2025BSHYDZZ017.
\end{acknowledgments}

\appendix

\section{All nonzero quasiprobability trajectories under the depolarizing channel}

\label{app:depolarizing_quasiprobability}

We list all nonzero reduced KD quasiprobability trajectories $\mathcal Q_S[\gamma]$ generated by the depolarizing channel below. The transition KD quasiprobabilities $q[\gamma]\in\{\kappa_1,\kappa_2,\kappa_3,\kappa_4\}$ are given by Eq.~\eqref{eq:kappa_parameter}. The initial Bell-state eigenvalues $p_{k=\Phi^\pm}$ and $p_{k=\Psi^\pm}$ are given by Eq.~\eqref{eq:pk}. For $k=\Phi^+$ and $s=00$,
\begin{subequations}
\begin{align}
\mathcal{Q}_S[\Phi^+,00,\Phi^+,00]&=\kappa_1\,p_{k=\Phi^+}, \\
\mathcal{Q}_S[\Phi^+,00,\Phi^-,00]&=\kappa_2\,p_{k=\Phi^+}, \\
\mathcal{Q}_S[\Phi^+,00,\Phi^+,11]&=\kappa_3\,p_{k=\Phi^+}, \\
\mathcal{Q}_S[\Phi^+,00,\Phi^-,11]&=\kappa_3\,p_{k=\Phi^+}, \\
\mathcal{Q}_S[\Phi^+,00,\Psi^+,01]&=\kappa_4\,p_{k=\Phi^+}, \\
\mathcal{Q}_S[\Phi^+,00,\Psi^+,10]&=\kappa_4\,p_{k=\Phi^+}, \\
\mathcal{Q}_S[\Phi^+,00,\Psi^-,01]&=\kappa_4\,p_{k=\Phi^+}, \\
\mathcal{Q}_S[\Phi^+,00,\Psi^-,10]&=\kappa_4\,p_{k=\Phi^+}.
\end{align}
\end{subequations}
For $k=\Phi^+$ and $s=11$,
\begin{subequations}
\begin{align}
\mathcal{Q}_S[\Phi^+,11,\Phi^+,00]&=\kappa_3\,p_{k=\Phi^+}, \\
\mathcal{Q}_S[\Phi^+,11,\Phi^-,00]&=\kappa_3\,p_{k=\Phi^+}, \\
\mathcal{Q}_S[\Phi^+,11,\Phi^+,11]&=\kappa_1\,p_{k=\Phi^+}, \\
\mathcal{Q}_S[\Phi^+,11,\Phi^-,11]&=\kappa_2\,p_{k=\Phi^+}, \\
\mathcal{Q}_S[\Phi^+,11,\Psi^+,01]&=\kappa_4\,p_{k=\Phi^+}, \\
\mathcal{Q}_S[\Phi^+,11,\Psi^+,10]&=\kappa_4\,p_{k=\Phi^+}, \\
\mathcal{Q}_S[\Phi^+,11,\Psi^-,01]&=\kappa_4\,p_{k=\Phi^+}, \\
\mathcal{Q}_S[\Phi^+,11,\Psi^-,10]&=\kappa_4\,p_{k=\Phi^+}.
\end{align}
\end{subequations}
For $k=\Phi^-$ and $s=00$,
\begin{subequations}
\begin{align}
\mathcal{Q}_S[\Phi^-,00,\Phi^+,00]&=\kappa_2\,p_{k=\Phi^-}, \\
\mathcal{Q}_S[\Phi^-,00,\Phi^-,00]&=\kappa_1\,p_{k=\Phi^-}, \\
\mathcal{Q}_S[\Phi^-,00,\Phi^+,11]&=\kappa_3\,p_{k=\Phi^-}, \\
\mathcal{Q}_S[\Phi^-,00,\Phi^-,11]&=\kappa_3\,p_{k=\Phi^-}, \\
\mathcal{Q}_S[\Phi^-,00,\Psi^+,01]&=\kappa_4\,p_{k=\Phi^-}, \\
\mathcal{Q}_S[\Phi^-,00,\Psi^+,10]&=\kappa_4\,p_{k=\Phi^-}, \\
\mathcal{Q}_S[\Phi^-,00,\Psi^-,01]&=\kappa_4\,p_{k=\Phi^-}, \\
\mathcal{Q}_S[\Phi^-,00,\Psi^-,10]&=\kappa_4\,p_{k=\Phi^-}.
\end{align}
\end{subequations}
For $k=\Phi^-$ and $s=11$,
\begin{subequations}
\begin{align}
\mathcal{Q}_S[\Phi^-,11,\Phi^+,00]&=\kappa_3\,p_{k=\Phi^-}, \\
\mathcal{Q}_S[\Phi^-,11,\Phi^-,00]&=\kappa_3\,p_{k=\Phi^-}, \\
\mathcal{Q}_S[\Phi^-,11,\Phi^+,11]&=\kappa_2\,p_{k=\Phi^-}, \\
\mathcal{Q}_S[\Phi^-,11,\Phi^-,11]&=\kappa_1\,p_{k=\Phi^-}, \\
\mathcal{Q}_S[\Phi^-,11,\Psi^+,01]&=\kappa_4\,p_{k=\Phi^-}, \\
\mathcal{Q}_S[\Phi^-,11,\Psi^+,10]&=\kappa_4\,p_{k=\Phi^-}, \\
\mathcal{Q}_S[\Phi^-,11,\Psi^-,01]&=\kappa_4\,p_{k=\Phi^-}, \\
\mathcal{Q}_S[\Phi^-,11,\Psi^-,10]&=\kappa_4\,p_{k=\Phi^-}.
\end{align}
\end{subequations}
For $k=\Psi^+$ and $s=01$,
\begin{subequations}
\begin{align}
\mathcal{Q}_S[\Psi^+,01,\Phi^+,00]&=\kappa_4\,p_{k=\Psi^+}, \\
\mathcal{Q}_S[\Psi^+,01,\Phi^+,11]&=\kappa_4\,p_{k=\Psi^+}, \\
\mathcal{Q}_S[\Psi^+,01,\Phi^-,00]&=\kappa_4\,p_{k=\Psi^+}, \\
\mathcal{Q}_S[\Psi^+,01,\Phi^-,11]&=\kappa_4\,p_{k=\Psi^+}, \\
\mathcal{Q}_S[\Psi^+,01,\Psi^+,01]&=\kappa_1\,p_{k=\Psi^+}, \\
\mathcal{Q}_S[\Psi^+,01,\Psi^-,01]&=\kappa_2\,p_{k=\Psi^+}, \\
\mathcal{Q}_S[\Psi^+,01,\Psi^+,10]&=\kappa_3\,p_{k=\Psi^+}, \\
\mathcal{Q}_S[\Psi^+,01,\Psi^-,10]&=\kappa_3\,p_{k=\Psi^+}.
\end{align}
\end{subequations}
For $k=\Psi^+$ and $s=10$,
\begin{subequations}
\begin{align}
\mathcal{Q}_S[\Psi^+,10,\Phi^+,00]&=\kappa_4\,p_{k=\Psi^+}, \\
\mathcal{Q}_S[\Psi^+,10,\Phi^+,11]&=\kappa_4\,p_{k=\Psi^+}, \\
\mathcal{Q}_S[\Psi^+,10,\Phi^-,00]&=\kappa_4\,p_{k=\Psi^+}, \\
\mathcal{Q}_S[\Psi^+,10,\Phi^-,11]&=\kappa_4\,p_{k=\Psi^+}, \\
\mathcal{Q}_S[\Psi^+,10,\Psi^+,01]&=\kappa_3\,p_{k=\Psi^+}, \\
\mathcal{Q}_S[\Psi^+,10,\Psi^-,01]&=\kappa_3\,p_{k=\Psi^+}, \\
\mathcal{Q}_S[\Psi^+,10,\Psi^+,10]&=\kappa_1\,p_{k=\Psi^+}, \\
\mathcal{Q}_S[\Psi^+,10,\Psi^-,10]&=\kappa_2\,p_{k=\Psi^+}.
\end{align}
\end{subequations}
For $k=\Psi^-$ and $s=01$,
\begin{subequations}
\begin{align}
\mathcal{Q}_S[\Psi^-,01,\Phi^+,00]&=\kappa_4\,p_{k=\Psi^-}, \\
\mathcal{Q}_S[\Psi^-,01,\Phi^+,11]&=\kappa_4\,p_{k=\Psi^-}, \\
\mathcal{Q}_S[\Psi^-,01,\Phi^-,00]&=\kappa_4\,p_{k=\Psi^-}, \\
\mathcal{Q}_S[\Psi^-,01,\Phi^-,11]&=\kappa_4\,p_{k=\Psi^-}, \\
\mathcal{Q}_S[\Psi^-,01,\Psi^+,01]&=\kappa_2\,p_{k=\Psi^-}, \\
\mathcal{Q}_S[\Psi^-,01,\Psi^-,01]&=\kappa_1\,p_{k=\Psi^-}, \\
\mathcal{Q}_S[\Psi^-,01,\Psi^+,10]&=\kappa_3\,p_{k=\Psi^-}, \\
\mathcal{Q}_S[\Psi^-,01,\Psi^-,10]&=\kappa_3\,p_{k=\Psi^-}.
\end{align}
\end{subequations}
For $k=\Psi^-$ and $s=10$,
\begin{subequations}
\begin{align}
\mathcal{Q}_S[\Psi^-,10,\Phi^+,00]&=\kappa_4\,p_{k=\Psi^-}, \\
\mathcal{Q}_S[\Psi^-,10,\Phi^+,11]&=\kappa_4\,p_{k=\Psi^-}, \\
\mathcal{Q}_S[\Psi^-,10,\Phi^-,00]&=\kappa_4\,p_{k=\Psi^-}, \\
\mathcal{Q}_S[\Psi^-,10,\Phi^-,11]&=\kappa_4\,p_{k=\Psi^-}, \\
\mathcal{Q}_S[\Psi^-,10,\Psi^+,01]&=\kappa_3\,p_{k=\Psi^-}, \\
\mathcal{Q}_S[\Psi^-,10,\Psi^-,01]&=\kappa_3\,p_{k=\Psi^-}, \\
\mathcal{Q}_S[\Psi^-,10,\Psi^+,10]&=\kappa_2\,p_{k=\Psi^-}, \\
\mathcal{Q}_S[\Psi^-,10,\Psi^-,10]&=\kappa_1\,p_{k=\Psi^-}.
\end{align}
\end{subequations}

\section{All nonzero quasiprobability trajectories under the amplitude-damping channel}

\label{app:decay_quasiprobability}

We list all nonzero reduced KD quasiprobability trajectories $\mathcal Q_S[\gamma]$ generated by the amplitude-damping channel below. The coefficients $\omega^\pm_{00}$ and $\omega^\pm_{11}$ are defined in Eq.~\eqref{eq:eigenvectors}. The initial Bell-state eigenvalues $p_{k=\Phi^\pm}$ and $p_{k=\Psi^\pm}$ are given by Eq.~\eqref{eq:pk}. For $k=\Phi^+$ and $s=00$,
\begin{subequations}
\begin{align}
\mathcal{Q}_S[\Phi^+,00,\Omega^+,00]
&=\frac12\left(\bigl(\omega^+_{00}\bigr)^2+\bar{\mu}\omega^+_{00}\omega^+_{11}\right)p_{k=\Phi^+},\\
\mathcal{Q}_S[\Phi^+,00,\Omega^-,00]
&=\frac12\left(\bigl(\omega^-_{00}\bigr)^2+\bar{\mu}\omega^-_{00}\omega^-_{11}\right)p_{k=\Phi^+}.
\end{align}
\end{subequations}
For $k=\Phi^-$ and $s=00$,
\begin{subequations}
\begin{align}
\mathcal{Q}_S[\Phi^-,00,\Omega^+,00]
&=\frac12\left(\bigl(\omega^+_{00}\bigr)^2-\bar{\mu}\omega^+_{00}\omega^+_{11}\right)p_{k=\Phi^-},\\
\mathcal{Q}_S[\Phi^-,00,\Omega^-,00]
&=\frac12\left(\bigl(\omega^-_{00}\bigr)^2-\bar{\mu}\omega^-_{00}\omega^-_{11}\right)p_{k=\Phi^-}.
\end{align}
\end{subequations}
For $k=\Phi^+$ and $s=11$,
\begin{subequations}
\begin{align}
\mathcal{Q}_S[\Phi^+,11,\Omega^+,00]
&=\frac{\mu^2}{2}\bigl(\omega^+_{00}\bigr)^2p_{k=\Phi^+},\\
\mathcal{Q}_S[\Phi^+,11,\Omega^+,11]
&=\frac12\left(\bar{\mu}\omega^+_{00}\omega^+_{11}+\bar{\mu}^2\bigl(\omega^+_{11}\bigr)^2\right)p_{k=\Phi^+},\\
\mathcal{Q}_S[\Phi^+,11,\Omega^-,00]
&=\frac{\mu^2}{2}\bigl(\omega^-_{00}\bigr)^2p_{k=\Phi^+},\\
\mathcal{Q}_S[\Phi^+,11,\Omega^-,11]
&=\frac12\left(\bar{\mu}\omega^-_{00}\omega^-_{11}+\bar{\mu}^2\bigl(\omega^-_{11}\bigr)^2\right)p_{k=\Phi^+},\\
\mathcal{Q}_S[\Phi^+,11,\Psi^+,01]
&=\frac{\bar{\mu}\mu}{4}\,p_{k=\Phi^+},\\
\mathcal{Q}_S[\Phi^+,11,\Psi^+,10]
&=\frac{\bar{\mu}\mu}{4}\,p_{k=\Phi^+},\\
\mathcal{Q}_S[\Phi^+,11,\Psi^-,01]
&=\frac{\bar{\mu}\mu}{4}\,p_{k=\Phi^+},\\
\mathcal{Q}_S[\Phi^+,11,\Psi^-,10]
&=\frac{\bar{\mu}\mu}{4}\,p_{k=\Phi^+}.
\end{align}
\end{subequations}
For $k=\Phi^-$ and $s=11$,
\begin{subequations}
\begin{align}
\mathcal{Q}_S[\Phi^-,11,\Omega^+,00]
&=\frac{\mu^2}{2}\bigl(\omega^+_{00}\bigr)^2p_{k=\Phi^-},\\
\mathcal{Q}_S[\Phi^-,11,\Omega^+,11]
&=\frac12\left(-\bar{\mu}\omega^+_{00}\omega^+_{11}+\bar{\mu}^2\bigl(\omega^+_{11}\bigr)^2\right)p_{k=\Phi^-},\\
\mathcal{Q}_S[\Phi^-,11,\Omega^-,00]
&=\frac{\mu^2}{2}\bigl(\omega^-_{00}\bigr)^2p_{k=\Phi^-},\\
\mathcal{Q}_S[\Phi^-,11,\Omega^-,11]
&=\frac12\left(-\bar{\mu}\omega^-_{00}\omega^-_{11}+\bar{\mu}^2\bigl(\omega^-_{11}\bigr)^2\right)p_{k=\Phi^-},\\
\mathcal{Q}_S[\Phi^-,11,\Psi^+,01]
&=\frac{\bar{\mu}\mu}{4}\,p_{k=\Phi^-},\\
\mathcal{Q}_S[\Phi^-,11,\Psi^+,10]
&=\frac{\bar{\mu}\mu}{4}\,p_{k=\Phi^-},\\
\mathcal{Q}_S[\Phi^-,11,\Psi^-,01]
&=\frac{\bar{\mu}\mu}{4}\,p_{k=\Phi^-},\\
\mathcal{Q}_S[\Phi^-,11,\Psi^-,10]
&=\frac{\bar{\mu}\mu}{4}\,p_{k=\Phi^-}.
\end{align}
\end{subequations}
For $k=\Psi^+$ and $s=01$,
\begin{subequations}
\begin{align}
\mathcal{Q}_S[\Psi^+,01,\Omega^+,00]
&=\frac{\mu}{2}\bigl(\omega^+_{00}\bigr)^2p_{k=\Psi^+},\\
\mathcal{Q}_S[\Psi^+,01,\Omega^-,00]
&=\frac{\mu}{2}\bigl(\omega^-_{00}\bigr)^2p_{k=\Psi^+},\\
\mathcal{Q}_S[\Psi^+,01,\Psi^+,01]
&=\frac{\bar{\mu}}{2}\,p_{k=\Psi^+}.
\end{align}
\end{subequations}
For $k=\Psi^+$ and $s=10$,
\begin{subequations}
\begin{align}
\mathcal{Q}_S[\Psi^+,10,\Omega^+,00]
&=\frac{\mu}{2}\bigl(\omega^+_{00}\bigr)^2p_{k=\Psi^+},\\
\mathcal{Q}_S[\Psi^+,10,\Omega^-,00]
&=\frac{\mu}{2}\bigl(\omega^-_{00}\bigr)^2p_{k=\Psi^+},\\
\mathcal{Q}_S[\Psi^+,10,\Psi^+,10]
&=\frac{\bar{\mu}}{2}\,p_{k=\Psi^+}.
\end{align}
\end{subequations}
For $k=\Psi^-$ and $s=01$,
\begin{subequations}
\begin{align}
\mathcal{Q}_S[\Psi^-,01,\Omega^+,00]
&=\frac{\mu}{2}\bigl(\omega^+_{00}\bigr)^2p_{k=\Psi^-},\\
\mathcal{Q}_S[\Psi^-,01,\Omega^-,00]
&=\frac{\mu}{2}\bigl(\omega^-_{00}\bigr)^2p_{k=\Psi^-},\\
\mathcal{Q}_S[\Psi^-,01,\Psi^-,01]
&=\frac{\bar{\mu}}{2}\,p_{k=\Psi^-}.
\end{align}
\end{subequations}
For $k=\Psi^-$ and $s=10$,
\begin{subequations}
\begin{align}
\mathcal{Q}_S[\Psi^-,10,\Omega^+,00]
&=\frac{\mu}{2}\bigl(\omega^+_{00}\bigr)^2p_{k=\Psi^-},\\
\mathcal{Q}_S[\Psi^-,10,\Omega^-,00]
&=\frac{\mu}{2}\bigl(\omega^-_{00}\bigr)^2p_{k=\Psi^-},\\
\mathcal{Q}_S[\Psi^-,10,\Psi^-,10]
&=\frac{\bar{\mu}}{2}\,p_{k=\Psi^-}.
\end{align}
\end{subequations}

\section{Stochastic mutual information changes with the most negative quasiprobability trajectory}
\label{app:global_minimizer_support}

Consider the family $c_1=-c_2=v/2$ and $c_3=1$, for which $u=0$ and the initial populations $p_{k=\Psi^\pm}$ vanish. Because the initial local marginals are maximally mixed, $p_{s_1}=p_{s_2}=1/2$, Eq.~\eqref{eq:stochastic-mutual information} reduces on every nonzero trajectory to
\begin{equation}
\Delta\iota[\gamma]
=\ln\!\left(
\frac{4p_kp_{s_1'}p_{s_2'}}{p_{k'}}
\right).
\label{eq:app-global-minimizer-support-rule}
\end{equation}
The nonzero trajectories listed above consequently give the following support values. Several trajectories share the same value, so the displayed trajectory labels reduce to ten analytical expressions:
\begin{subequations}
\label{eq:global-minimizer-support-values}
\begin{align}
\Delta\iota[\Phi^+,00,\Omega^+,00]
&=\Delta\iota[\Phi^+,11,\Omega^+,00]
\nonumber\\
&=\ln\!\left(
\frac{(2+v)(1+\mu)^2}{A+B}
\right),\\[1mm]
\Delta\iota[\Phi^+,00,\Omega^-,00]
&=\Delta\iota[\Phi^+,11,\Omega^-,00]
\nonumber\\
&=\ln\!\left(
\frac{(2+v)(1+\mu)^2}{A-B}
\right),\\[1mm]
\Delta\iota[\Phi^-,00,\Omega^+,00]
&=\Delta\iota[\Phi^-,11,\Omega^+,00]
\nonumber\\
&=\ln\!\left(
\frac{(2-v)(1+\mu)^2}{A+B}
\right),\\[1mm]
\Delta\iota[\Phi^-,00,\Omega^-,00]
&=\Delta\iota[\Phi^-,11,\Omega^-,00]
\nonumber\\
&=\ln\!\left(
\frac{(2-v)(1+\mu)^2}{A-B}
\right),\\[1mm]
\Delta\iota[\Phi^+,11,\Omega^+,11]
&=\ln\!\left(
\frac{(2+v)\bar{\mu}^2}{A+B}
\right),\\
\Delta\iota[\Phi^+,11,\Omega^-,11]
&=\ln\!\left(
\frac{(2+v)\bar{\mu}^2}{A-B}
\right),\\[1mm]
\Delta\iota[\Phi^+,11,\Psi^\pm,01]
&=\Delta\iota[\Phi^+,11,\Psi^\pm,10]
\nonumber\\
&=\ln\!\left(
\frac{(2+v)(1-\mu^2)}{2\bar{\mu}\mu}
\right),\\[1mm]
\Delta\iota[\Phi^-,11,\Omega^+,11]
&=\ln\!\left(
\frac{(2-v)\bar{\mu}^2}{A+B}
\right),\\
\Delta\iota[\Phi^-,11,\Omega^-,11]
&=\ln\!\left(
\frac{(2-v)\bar{\mu}^2}{A-B}
\right),\\[1mm]
\Delta\iota[\Phi^-,11,\Psi^\pm,01]
&=\Delta\iota[\Phi^-,11,\Psi^\pm,10]
\nonumber\\
&=\ln\!\left(
\frac{(2-v)(1-\mu^2)}{2\bar{\mu}\mu}
\right).
\end{align}
\end{subequations}
Here $A$ and $B$ are defined in Eq.~\eqref{eq:A_and_B}. For the present family of the initial parameters, we have $A=2(1-\mu+\mu^2)$.

We now identify the smallest support value for the parameters $0<\mu<\mu_c$ and $v=v_{\mathcal Q_{\min}}\in(0,2)$. The two nonzero initial populations are 
\begin{equation}
p_{k=\Phi^-}=\frac{2-v}{4}<\frac{2+v}{4}=p_{k=\Phi^+}.
\end{equation}
The final local populations are
\begin{equation}
p_{s_j'=1}=\frac{\bar\mu}{2}<\frac{1+\mu}{2}=p_{s_j'=0},
\end{equation}
which satisfy $p_{s_j'=1}^2<p_{s_j'=0}p_{s_j'=1}<p_{s_j'=0}^2$. 
Moreover, $p_{k'=\Omega^+}$ is the largest final global eigenvalue. Indeed, $B>0$ gives $p_{k'=\Omega^+}>p_{k'=\Omega^-}$, while $p_{k'=\Psi^\pm}=\bar\mu\mu/2$ and
\begin{multline}
4\left(p_{k'=\Omega^+}-p_{k'=\Psi^\pm}\right)
=A+B-2\bar\mu\mu\\
>A-2\bar\mu\mu
=4\left(\mu-\frac12\right)^2+1>0.
\end{multline}
Since the logarithm in Eq.~\eqref{eq:app-global-minimizer-support-rule} is strictly increasing, its argument is minimized by choosing the smaller initial population $p_{k=\Phi^-}$, the smaller final local product $p_{s_1'=1}p_{s_2'=1}$, and the larger final global population $p_{k'=\Omega^+}$. The nonzero-trajectory list shows that the only trajectory realizing all three choices is $[\Phi^-,11,\Omega^+,11]$. Hence
\begin{equation}
\Delta\iota[\Phi^-,11,\Omega^+,11]
=\ln\!\left(
\frac{(2-v)\bar\mu^2}{A+B}
\right)
\end{equation}
is the unique smallest support value. At the boundary $\mu_c\leq\mu<1$, one has $v_{\mathcal Q_{\min}}=2$ and therefore $p_{k=\Phi^-}=0$. This trajectory is then omitted from the nonzero trajectory support, so the uniqueness statement above applies only to $0<\mu<\mu_c$.

\section{Proof of Theorem \ref{thm:global-transition-minimum}}

\label{app:proof_theorem_1}

Theorem~\ref{thm:global-transition-minimum} identifies the minimal (most negative) value of the transition KD quasiprobability $q[\gamma]$ under the amplitude-damping channel. Below is the detailed proof. 

\begin{proof}[Proof of Theorem~\ref{thm:global-transition-minimum}]
For $v>0$, only the following two transition quasiprobability trajectories are negative. All remaining nonzero transition quasiprobabilities are nonnegative.
\begin{subequations}
\begin{align}
q_+&=q[\Phi^+,00,\Omega^-,00]
=\frac{B(v)-2\mu-\bar{\mu}^2v}{4B(v)},\\
q_-&=q[\Phi^-,11,\Omega^+,11]
=\frac{\bar{\mu}^2\left(B(v)-2\mu-v\right)}{4B(v)},
\end{align}
\end{subequations}
where $B(v)=\sqrt{\bar{\mu}^2v^2+4\mu^2}$ and $0<v\leq2$. The first trajectory has
\begin{equation}
\frac{d q_+}{dv}
=\frac{\bar{\mu}^2\mu(v-2\mu)}{2B^3(v)},
\end{equation}
and is therefore minimized at $v=2\mu$. At this point, we have
\begin{equation}
\left.q_+\right|_{v=2\mu}-\left.q_-\right|_{v=2\mu}
=\frac{\left(\sqrt{1+\bar{\mu}^2}-1\right)
\left(1-\bar{\mu}^2\right)}{4\sqrt{1+\bar{\mu}^2}}>0.
\end{equation}
The minimum of the second trajectory cannot exceed its value at this point, whereas the first trajectory reaches its own minimum there and has a strictly larger value. The global minimum of the transition quasiprobability therefore lies on the second trajectory, namely $q_-=q[\Phi^-,11,\Omega^+,11]$. 

The derivative of the $q_-$ trajectory is
\begin{equation}
\frac{d q_-}{dv}
=-\frac{\bar{\mu}^2\mu\left(2\mu-\bar{\mu}^2v\right)}{2B^3(v)}.
\end{equation}
Thus, its stationary point is $v=2\mu/\bar{\mu}^2$. It belongs to the interval $0<v\leq2$ if and only if $2\mu/\bar{\mu}^2\leq2$, which is equivalent to $\mu^2-3\mu+1\geq0$. For $0<\mu<1$, this inequality holds precisely for $\mu\leq(3-\sqrt5)/2=\mu_q$, since its other root lies above unity. For $\mu>\mu_q$, $q_-$ decreases throughout the interval and reaches its minimum at $v=2$. Substituting the respective minimizing values of $v$ into $q_-$ yields Eq.~\eqref{eq:transition-q-min-value}.
\end{proof}

\section{Proof of Theorem \ref{thm:global-most-negative}}
\label{app:negative}

Theorem~\ref{thm:global-most-negative} identifies the minimal (most negative) value of the reduced KD quasiprobability $\mathcal Q_S[\gamma]$ under the amplitude-damping channel. Below is the detailed proof. 

\begin{proof}[Proof of Theorem~\ref{thm:global-most-negative}]

For $0<v\le2$, there are only two trajectories with negative weights, namely $\mathcal Q_S[\Phi^+,00,\Omega^-,00]$ and $\mathcal Q_S[\Phi^-,11,\Omega^+,11]$. Direct calculation gives
\begin{subequations}
\begin{align}
\label{eq:Q00}
\mathcal Q^+_{S}
&=\mathcal Q_S[\Phi^+,00,\Omega^-,00]=
\frac{(2+v)
\left(B(v)-2\mu-\bar{\mu}^2v\right)}
{16B(v)},
\\
\mathcal Q^-_{S}
&=\mathcal Q_S[\Phi^-,11,\Omega^+,11]=
\frac{\bar{\mu}^2(2-v)
\left(B(v)-2\mu-v\right)}
{16B(v)},
\end{align}
\end{subequations}
where the parameter $B(v)$ is defined in Eq. \eqref{eq:A_and_B}.

To compare the two trajectory weights, multiplying their difference by $16B(v)$ gives
\begin{equation}
16B(v)(\mathcal{Q}^+_{S}-\mathcal{Q}^-_{S})
= \bar{\mu}^2v^2\Bigl(\frac{2(1-\bar{\mu}^2)+v(1+\bar{\mu}^2)}{B(v)+2\mu}-2\Bigr).
\end{equation}
Therefore, $\mathcal Q^+_{S}<\mathcal Q^-_{S}$ is equivalent to 
\begin{equation}
   \frac{
2(1-\bar{\mu}^2)+v(1+\bar{\mu}^2)}
{B(v)+2\mu}
<2. 
\end{equation}
Using $1-\bar{\mu}^2=\mu(2-\mu),$
this inequality becomes
\begin{equation}
    2B(v)
+2\mu^2
-v(1+\bar{\mu}^2)
>0.
\end{equation}
Moreover, differentiating the left-hand side yields
\begin{equation}
    \frac{2\bar{\mu}^2v}{B(v)}
-(1+\bar{\mu}^2)
\le
2\bar{\mu}-(1+\bar{\mu}^2)
=
-(1-\bar{\mu})^2
\le0,
\end{equation}
so the left-hand side is monotonically decreasing on
\([0,2]\).
At $v=2$,
\begin{equation}
    2B(2)+2\mu^2-2(1+\bar{\mu}^2)
    =
4(\sqrt{\bar{\mu}^2+\mu^2}-\bar{\mu})
>0.
\end{equation}
Therefore, its minimum at $v=2$ is positive, and the inequality $\mathcal{Q}^+_{S}<\mathcal{Q}^-_{S}$ holds for every
\(0<\mu<1\) and
\(0<v\le2\).
Consequently, for \(v>0\), the minimum reduced KD quasiprobability is attained by $[\Phi^+,00,\Omega^-,00]$,
while the result for \(v<0\) follows immediately from the exchange
\(\Phi^+\leftrightarrow\Phi^-\).

Next we derive the minimizing value
$v_{\mathcal Q_{\min}}$
introduced in Eq.~\eqref{eq:rstar-definition}.  Consider the
one-variable objective function
\begin{equation}
F(v)
=
\mathcal{Q}_S[\Phi^+,00,\Omega^-,00],
\label{eq:app-F}
\end{equation}
with $0\le v\le2$. The derivative of the transition quasiprobability is
\begin{equation}
\frac{d}{dv}
q[\Phi^+,00,\Omega^-,00]
=
\frac{\bar{\mu}^2\mu(v-2\mu)}
{2B^3(v)}.
\label{eq:app-dq}
\end{equation}
Differentiating Eq.~\eqref{eq:app-F} gives
\begin{equation}
F'(v)
=
\frac1{16}
\left(
1
-
\frac{2\mu+\bar{\mu}^2v}
{B(v)}
-
\frac{
2\bar{\mu}^2\mu
(2+v)
(2\mu-v)}
{B^3(v)}
\right).
\label{eq:app-Fprime}
\end{equation}
For
$0<v<2\mu$,
Eq.~\eqref{eq:app-Fprime} can be rewritten as
\begin{equation}
16F'(v)
=
\frac{B(v)-2\mu-\bar{\mu}^2v}
{B(v)}
+
\frac{
2\bar{\mu}^2\mu
(2+v)
(v-2\mu)}
{B^3(v)}.
\label{eq:app-Fprime-alt}
\end{equation}
Since
$q[\Phi^+,00,\Omega^-,00]<0$,
the first term in Eq.~\eqref{eq:app-Fprime-alt} is negative.
The second term is also negative because
$v<2\mu$.
Hence $F'(v)<0$ for $0<v<2\mu$.

At
$v=2\mu$, we have
\begin{equation}
16F'(2\mu)
=
1-\sqrt{1+\bar{\mu}^2}
<
0.
\label{eq:app-middle}
\end{equation}
For
$v\ge2\mu$,
we introduce $t = \bar{\mu}v/2\mu$ and $R=\sqrt{1+t^2}$. The sign of
$dF/dt$
is determined by
\begin{equation}
G(t)
=
\mu\left(R^3-\bar{\mu}t^3\right)
+
\bar{\mu}(2\bar{\mu}-1)t
-
(\bar{\mu}^2-\bar{\mu}+1).
\label{eq:app-G}
\end{equation}
Differentiating $G(t)$ gives
\begin{equation}
G'(t)
=3\mu t(R-\bar{\mu}t)
+
\bar{\mu}(2\bar{\mu}-1).
\label{eq:app-Gprime}
\end{equation}
If
$\bar{\mu}\ge1/2$,
then the derivative in Eq.~\eqref{eq:app-Gprime} is positive. If
$\bar{\mu}<1/2$,
define $h(t)
=
t(R-\bar{\mu}t)$. Its derivative is
\begin{equation}
h'(t)
=
\frac{1+2t^2}{R}
-
2\bar{\mu}t
>
2\mu t
>
0,
\label{eq:app-hprime}
\end{equation}
so that $h(t)\ge h(\bar{\mu})$, which implies
\begin{equation}
3\mu t(R-\bar{\mu}t)
+
\bar{\mu}(2\bar{\mu}-1)>0.
\label{eq:app-Gprime-case2}
\end{equation}
Therefore,
$F'(v)$
has at most one zero in the interval
$(2\mu,2)$. Using Eq.~\eqref{eq:app-Fprime},
the stationary condition becomes
\begin{equation}
1
-
\frac{2\mu+\bar{\mu}^2v}
{B(v)}
-
\frac{
2\bar{\mu}^2\mu
(2+v)
(2\mu-v)}
{B^3(v)}
=
0.
\label{eq:app-stationary-explicit}
\end{equation}

The occurrence of an interior or boundary minimum is determined by $F'(2)=0$. Substituting
$v=2$
into Eq.~\eqref{eq:app-Fprime}
yields
\begin{equation}
16\mu^5
-
48\mu^4
+
68\mu^3
-
52\mu^2
+
21\mu
-
4
=
0.
\label{eq:app-critical-poly}
\end{equation}
This polynomial has a unique root in
$(0,1)$, given by $\mu_c\approx 0.82668$. Consequently,
the minimizer of
$F(v)$
is
\begin{equation}
v_{\mathcal Q_{\min}}
=
\begin{cases}
\nu_0,
&
0<\mu<\mu_c,
\\
2,
&
\mu_c\le\mu<1,
\end{cases}
\label{eq:app-rstar}
\end{equation}
where
$\nu_0$
is the unique solution of
Eq.~\eqref{eq:app-stationary-explicit}
in the interval
$(2\mu,2)$.

For each fixed $v$, the Bell-tetrahedron constraints give $p_{\Phi^+}\leq(2+v)/4$, with equality only at $c_3=1$ and $u=c_1+c_2=0$. Since $\mathcal Q^+_{S}=\mathcal Q_S[\Phi^+,00,\Omega^-,00]$ is negative, this choice gives the smallest complete weight at fixed $v$. Together with the comparison $\mathcal Q^+_{S}<\mathcal Q^-_{S}$ and the minimization of $F(v)=\mathcal Q^+_{S}$ above, these facts establish the minimum of $\mathcal Q_S[\gamma]$ for $v>0$. The equality conditions imply $c_1=-c_2=v_{\mathcal Q_{\min}}/2$. Substitution into $F(v)$ gives the first line of Eq.~\eqref{eq:global-Qmin}, while inserting $v_{\mathcal Q_{\min}}=2$ gives its second line. The $v<0$ case follows from the exchange $\Phi^+\leftrightarrow\Phi^-$. 
\end{proof}

\section{Proof of the violation of Horv\'ath's criterion under amplitude damping}

\label{app:violation_Horvath}

We present a detailed proof of the violation of Horv\'ath's criterion by first proving Lemma~\ref{lem:structured-minimum-condition}, which identifies the smallest support value of $\Delta\iota[\gamma]$ for the initial-state parameters in Eq.~\eqref{eq:structured-slice}. 

\begin{proof}[Proof of Lemma~\ref{lem:structured-minimum-condition}]
The parameters in Eq.~\eqref{eq:structured-slice} give the initial Bell-state eigenvalues
\begin{subequations}
\begin{align}
p_{k=\Phi^+}&=\frac{1-r-\delta}{4},\\
p_{k=\Psi^-}&=\frac{1+3r-\delta}{4}, \\
p_{k=\Phi^-}&=p_{k=\Psi^+}=\frac{1-r+\delta}{4}.
\end{align}
\end{subequations}
Thus \(p_{k=\Phi^+}<p_{k=\Phi^-}=p_{k=\Psi^+}\) and \(p_{k=\Psi^-}>p_{k=\Phi^+}\). The corresponding final Bell-state eigenvalues are
\begin{subequations}
\begin{align}
p_{k'=\Psi^+} &= \frac{\bar\mu}{4}\left((1+\mu)(1-r)+\delta\right),\\
p_{k'=\Psi^-} &= \frac{\bar\mu}{4}\left((1+\mu)+(3-\mu)r-\delta\right),\\
p_{k'=\Omega^\pm} &= \frac{1}{4}\bigl(1+\mu^2-\bar\mu^2r\pm B\bigr).
\end{align}
\end{subequations}
The final local eigenvalues are
\begin{equation}
p_{s'_j=0}=\frac{1+\mu}{2},\quad p_{s'_j=1}=\frac{\bar\mu}{2},
\end{equation}
for $j=1,2$. Consequently, \(p_{s'_1=1}p_{s'_2=1}=\bar\mu^2/4\) is the smallest final local product.

From $\Delta\iota[\gamma] = \ln\bigl(4p_k\,p_{s'_1}p_{s'_2}/p_{k'}\bigr)$, we compare the support values using
\begin{equation}
R[k,k',s']=\exp\Delta\iota[\gamma]
=\frac{4p_kp_{s'_1}p_{s'_2}}{p_{k'}},
\end{equation}
where $s'\in\{00,01,10,11\}$. Since the logarithm is strictly increasing, ordering the support values of $\Delta\iota[\gamma]$ is equivalent to ordering the corresponding values of $R[k,k',s']$.

We first minimize $R[k,k',s']$ within the $\Omega$ sector. The numerator is minimized by $k=\Phi^+$ and $s'=11$, whereas the denominator is larger for $k'=\Omega^+$ than for $k'=\Omega^-$ because $B>0$. Hence the unique minimum in this sector is
\begin{equation}
R[\Phi^+,\Omega^+,11]
=\frac{p_{k=\Phi^+}\bar\mu^2}{p_{k'=\Omega^+}}.
\end{equation}
The $s'=00$ trajectories have a larger local factor $(1+\mu)^2/4$ and therefore cannot yield a smaller value of $R[k,k',s']$.

In the $\Psi$ sector, the allowed final local labels $s'=01$ and $s'=10$ yield degenerate support values because $p_{s'_1}p_{s'_2}=\bar\mu(1+\mu)/4$. For each fixed $k'=\Psi^\pm$, the numerator is again minimized by $k=\Phi^+$. The only competing support values are therefore
\begin{subequations}
\begin{align}
R[\Phi^+,\Psi^+,01]
&=R[\Phi^+,\Psi^+,10]
=\frac{p_{k=\Phi^+}\bar\mu(1+\mu)}{p_{k'=\Psi^+}},\\
R[\Phi^+,\Psi^-,01]
&=R[\Phi^+,\Psi^-,10]
=\frac{p_{k=\Phi^+}\bar\mu(1+\mu)}{p_{k'=\Psi^-}}.
\end{align}
\end{subequations}
Thus, $R[\Phi^+,\Omega^+,11]$ is the unique minimum if and only if it is smaller than both distinct $\Psi$-sector support values.

Because $R[\Phi^+,\Psi^+,01]=R[\Phi^+,\Psi^+,10]$, it is sufficient to compare $R[\Phi^+,\Omega^+,11]$ with either member of this degenerate pair. For $s'=01$, the inequality $R[\Phi^+,\Omega^+,11]<R[\Phi^+,\Psi^+,01]$ gives
\begin{equation}
(1+\mu)p_{k'=\Omega^+}
>\bar\mu p_{k'=\Psi^+},
\end{equation}
which is equivalent to 
\begin{equation}
2\mu(1+\mu) +(1+\mu)B-\bar\mu^2\delta>0.
\end{equation}
This positivity follows from $B>\bar\mu\delta>\bar\mu^2\delta$. Therefore, $R[\Phi^+,\Omega^+,11]$ is always smaller than both degenerate $\Psi^+$ support values.

Because $R[\Phi^+,\Psi^-,01]=R[\Phi^+,\Psi^-,10]$, the remaining comparison can likewise be made with $s'=01$. The inequality $R[\Phi^+,\Omega^+,11]<R[\Phi^+,\Psi^-,01]$ gives
\begin{equation}
(1+\mu)p_{k'=\Omega^+}>\bar\mu p_{k'=\Psi^-}. 
\end{equation}
This inequality holds if and only if $G_\mu(r,\delta)>0$, where
\begin{equation}
G_\mu(r,\delta) = (1+\mu)B+\bar\mu^2\delta +2\mu(1+\mu)-4\bar\mu^2r. 
\end{equation}
Combining the $\Omega$, $\Psi^+$, and $\Psi^-$ comparisons proves the lemma.
\end{proof}

Using Lemma~\ref{lem:structured-minimum-condition}, we next prove Theorem~\ref{thm:structured-slice-violation}, which establishes that Horv\'ath's criterion is violated for the initial-state parameters in Eq.~\eqref{eq:structured-slice}. 

\begin{proof}[Proof of Theorem~\ref{thm:structured-slice-violation}]
For \(v=-\delta<0\), write
\(|\Omega^+\rangle=\omega^+_{00}|00\rangle+\omega^+_{11}|11\rangle\), where 
\begin{subequations}
\begin{align}
\omega^+_{00}&=\frac{2\mu+B}{\sqrt{\bar\mu^2\delta^2+(2\mu+B)^2}},\\
\omega^+_{11}&=-\frac{\bar\mu\delta}{\sqrt{\bar\mu^2\delta^2+(2\mu+B)^2}}.
\end{align}
\end{subequations}
Then
\begin{subequations}
\begin{align}
q[\Phi^+,11,\Omega^+,11]
&=\frac12\left(\bar\mu\omega^+_{00}\omega^+_{11}+\bar\mu^2\bigl(\omega^+_{11}\bigr)^2\right)\\
&=\frac{\bar\mu^2\delta(\bar\mu^2\delta-2\mu-B)}
       {2\left(\bar\mu^2\delta^2+(2\mu+B)^2\right)}.
\end{align}
\end{subequations}
Moreover,
\begin{equation}
B^2-\bar\mu^4\delta^2
=\bar\mu^2\delta^2\bigl(1-\bar\mu^2\bigr)+4\mu^2>0,
\end{equation}
so that \(B>\bar\mu^2\delta\). Multiplying this result by
\(p_{k=\Phi^+}=(1-r-\delta)/4\) gives
Eq.~\eqref{eq:structured-Qstar}. Since \(1-r-\delta>0\), the denominator is positive, and the factor \(\bar\mu^2\delta-2\mu-B\) is strictly negative. Hence \(\tilde{\mathcal Q}_m<0\). Lemma~\ref{lem:structured-minimum-condition} shows that \(G_\mu(r,\delta)>0\) is exactly the condition under which \(R[\Phi^+,\Omega^+,11]\) corresponds to the unique smallest support value. Corollary~\ref{Q_m} then implies the violation of Horv\'ath's criterion via the endpoint test. 
\end{proof}


%

\end{document}